\documentclass[a4paper,twocolumn,11pt,accepted=2025-01-24]{quantumarticle}
\pdfoutput=1
\usepackage[utf8]{inputenc}
\usepackage[english]{babel}
\usepackage[T1]{fontenc}
\usepackage{amsmath}
\usepackage{hyperref}
\usepackage[numbers]{natbib}
\usepackage{listings}

\usepackage[resetlabels]{multibib}

\newcites{article}{article references}
\newcites{book}{book references}
\newcites{misc}{misc references}
\newcites{repo}{repository references}
\newcites{web}{website references}
\newcites{other}{Other references}

\usepackage{tikz}
\usepackage{lipsum}

\usepackage{amsthm}
\usepackage{amssymb}
\usepackage[version=4]{mhchem}

\newtheorem{proposition}{Proposition}

\usepackage{graphicx}

\graphicspath{{./Images/}}
\usepackage{xcolor}
\usepackage{verbatim}
\usepackage{mathtools}
\usepackage{bm}
\usepackage{yhmath}
\usepackage[inline]{asymptote}
\usepackage{cancel}
\usepackage{relsize}
\usepackage{array}
\usepackage{bm}
\usepackage{glossaries}
\usepackage{cleveref}

\newacronym{onv}{ONV}{occupation number vector}
\newacronym{qpe}{QPE}{quantum phase estimation}
\newacronym{qft}{QFT}{quantum Fourier transform}
\newacronym{cnt}{CNT}{carbon nanotube}
\newacronym{fccvd}{FCCVD}{floating catalytic chemical vapor deposition}
\newacronym{cf}{CF}{carbon fiber}

\newcommand{\set}[1]{\left\{ #1 \right\}}

\newcommand{\group}[1]{\left( #1 \right)}
\newcommand{\norm}[1]{\left\|#1\right\|}

\newcommand{\op}[1]{\mathcal{#1}} 
\newcommand{\todo}[1] {\textcolor{red}{ToDo: #1}}

\newcommand{\bra}[1]{\left\langle #1 \right|}
\newcommand{\ket}[1]{\left| #1 \right\rangle}

\newcommand{\ord}{{\mathcal O}}

\newcommand{\bea}{\begin{eqnarray}}
\newcommand{\ea}{\end{eqnarray}}
\renewcommand{\emptyset}{\varnothing}
\newcommand{\ceiling}[1]{\left\lceil #1 \right\rceil}
\newcommand{\defn}{:=}

\begin{document}


\title{Refining resource estimation for the quantum computation of vibrational molecular spectra through Trotter error analysis}

\author{Dimitar Trenev}
\affiliation{ExxonMobil Technology and Engineering Company, Annandale, NJ 08801, USA}
\author{Pauline J Ollitrault}
\affiliation{IBM Quantum, IBM Research Zurich, S{\"a}umerstrasse 4, 8803 R{\"u}schlikon, Switzerland}
\altaffiliation[Present address: ]{QC Ware, Palo Alto, CA, USA}
\author{Stuart M. Harwood}
\affiliation{ExxonMobil Technology and Engineering Company, Annandale, NJ 08801, USA}
\author{Tanvi P. Gujarati}
\affiliation{IBM Quantum, IBM Research Almaden, San Jose, CA 95120, USA}
\author{Sumathy Raman}
\affiliation{ExxonMobil Technology and Engineering Company, Annandale, NJ 08801, USA}
\author{Antonio Mezzacapo}
\author{Sarah Mostame}
\email{sarah.mostame@ibm.com}
\affiliation{IBM Quantum, IBM T.J. Watson Research Center, Yorktown Heights, NY 10598, USA}
\maketitle
\begin{abstract}
Accurate simulations of vibrational molecular spectra are expensive on conventional computers. Compared to the electronic structure problem, the vibrational structure problem with quantum computers is less investigated. In this work we accurately estimate quantum resources, such as number of logical qubits and quantum gates, required for vibrational structure calculations on a programmable quantum computer. Our approach is based on quantum phase estimation and focuses on fault-tolerant quantum devices. In addition to asymptotic estimates for generic chemical compounds, we present a more detailed analysis of the quantum resources needed for the simulation of the Hamiltonian arising in the vibrational structure calculation of acetylene-like polyynes of interest. Leveraging nested commutators, we provide an in-depth quantitative analysis of trotter errors compared to the prior investigations. Ultimately, this work serves as a guide for analyzing the potential quantum advantage within vibrational structure simulations.
\end{abstract}


\section{Introduction}
Attaining the electronic structure of molecules is one of the core problems  in quantum chemistry and material design \cite{Marzari2021}. Although the electronic structure of small molecules can be obtained with good accuracy, solving the electronic structure of larger molecular systems continues to be challenging and computationally intensive. 
A vast number of methods -- such as Hartree-Fock, configuration interaction, coupled cluster theory, and density functional theory -- have been developed to address the accuracy and efficiency in treating the molecular electronic-structure problems \cite{MESTheory,Leach_CC,Cramer_CC,Jensen_CC}. 
Quantum computing
offers an alternative approach\footnote{beyond these quantum mechanical approaches on classical computers} to tackle the computational challenges associated with these problems using quantum hardware \cite{doi:10.1126/science.1113479,Liu2022,Cao2019}.
Significant advances have been made in the past two decades in the development of quantum algorithms to harness the capabilities of currently available noisy quantum devices, particularly for addressing electronic-structure problems~\cite{Lanyon2010,Kandala2017,Peruzzo2014,Cerezo2021}.
Additionally, researchers have estimated the quantum resources required for simulating molecular electronic structure on future fault-tolerant quantum devices ~\cite{reiher2017}.

In the realm of modern chemistry, electronic structure calculations play a crucial role in obtaining structural properties and spectroscopic characteristics of molecules, as descriptors enabling screening of materials\footnote{e.g. active materials such as catalysts, sorbents and membranes for separation} in industrial applications.
However, the quest for accurate and efficient electronic structure methods is not the sole computational obstacle in quantum chemistry and material science. The vibrational structure problem represents another fundamental challenge.
To make a significant impact in both fundamental science and industrial applications, it is necessary to go beyond the electronic structure and develop a model that requires a comprehensive understanding of the molecular vibrational structure, e.g., to determine reaction rate constants~\cite{Laidler, Sumathi2002}. 
While classical computers can handle the electronic structure problem of small molecules with reasonable accuracy, the same does not hold true for calculating the vibrational structure of molecules beyond the harmonic approximation.
Previous works suggest that quantum-computing approaches for calculating vibrational structure have the potential to reach quantum advantage prior to their electronic-structure counterparts, for the respective required energy precision~\cite{PhysRevA.104.062419}. 
Resource estimate for a variety of bosonic operators has been provided in \cite{sawayanpj2020}, employing different encoding methods.
Nevertheless, the quantum computing applications for molecular vibrational structure calculations have received limited scholarly attention. 
Only a modest number of studies have been conducted to date, and these studies  marginally address the quantum resource requirements~\cite{sawayaJPCL2019,ollitraultCS2020,sparrowNature2018,magannPRR2021,majland2022,jahangiriPCCP2020,richerme2023,dalzell2023,D3SC02453A}.
The authors in~\cite{PhysRevA.104.062419} present algorithms for calculating vibrational spectra on both near- and long-term quantum devices. They compare concrete quantum resources such as qubit count, as well as proxy quantities determining computational complexity, e.g. Hamiltonian magnitude, for both electronic and vibrational  structure simulations. 
Lastly, they present numerical error analyses on several
vibrational Hamiltonian examples - carbon monoxide, the isoformyl radical, ozone, and a model Hamiltonian of Fermi resonance - as a proof of principle. 
They speculate quantum advantage will take place for a real-world vibrational structure problem instance before any electronic structure one. 
However, they suggest further investigation on correlation between vibrational problem instances and Trotter error.

Here, we present a more extensive study of the Trotter error for vibrational structure simulations based on the recent theory in ref.~\cite{childs2021theory}.
Our focus in this work is on providing quantum resource  estimation for  simulating molecular vibrational structure.  
It should be noted that to present results independent of particular hardware and error correction strategies, and given the considerable research and progress in these areas — such as recent advancements in low-density parity-check (LDPC)~\cite{LDPC2024} — we focus on the logical quantum resource requirements without considering any specific error-correction method.
To improve the previous studies, we start with a distinct Hamiltonian model to go beyond harmonic approximation and obtain precise Hamiltonians for the examples represented in this work. 
Our results are based on the ``$L$-mode representation" of the Hamiltonian and the vibrational self-consistent field (VSCF) method~\cite{hansen2010new}.
We first offer an asymptotic analysis and then provide a more quantitative study on industrially-relevant chemical compounds.
More specifically, our examples focus on polyyne molecules, which are key intermediates in the commercial conversion of light gases into valuable materials such as carbon nanotubes and carbon fibers.


In the next section we will introduce a quantum approach for vibrational structure calculations and analyze its computational complexity. As mentioned above, our study here is based on the $L$-mode Hamiltonian representation which is more suitable\footnote{as opposed to, e.g., quartic force field~\cite{Lee1995}.} for asymptotic analysis. We review mapping of the vibrational Hamiltonian to qubits and present comparison of qubit requirements for the “unary” and “binary” encoding methods.  Note that throughout this article the term “qubit” refers to “logical qubits” and we will not discuss any specific error-correction methods. 
We continue \Cref{sec:QC_algo} by estimating the complexity of the algorithm (based on quantum phase estimation) when Trotterization is used for approximate Hamiltonian evolution. 
We continue \Cref{sec:QC_algo} by estimating the complexity of the algorithm (based on quantum phase estimation) when Trotterization is used for approximate Hamiltonian evolution. 
In \Cref{sec:Improving_Trott}, we focus on improving the accuracy of our asymptotic studies, following up on the results from~\citet{childs2021theory}. In \Cref{sec:RE_polyynes}, we deliver more precise numerical results on the quantum computational cost for a set of industrially-relevant chemical compounds, namely acetylene-like polyyne molecules.
Finally, we provide a summary of our results, and discuss the themes for future research in comparing the quantum resource requirements for vibrational and electronic structure problems. 


\section{Quantum algorithm for vibrational structure calculations}
\label{sec:QC_algo}
\begin{figure}[ht]
    \centering
    \includegraphics[width = 0.48\textwidth]{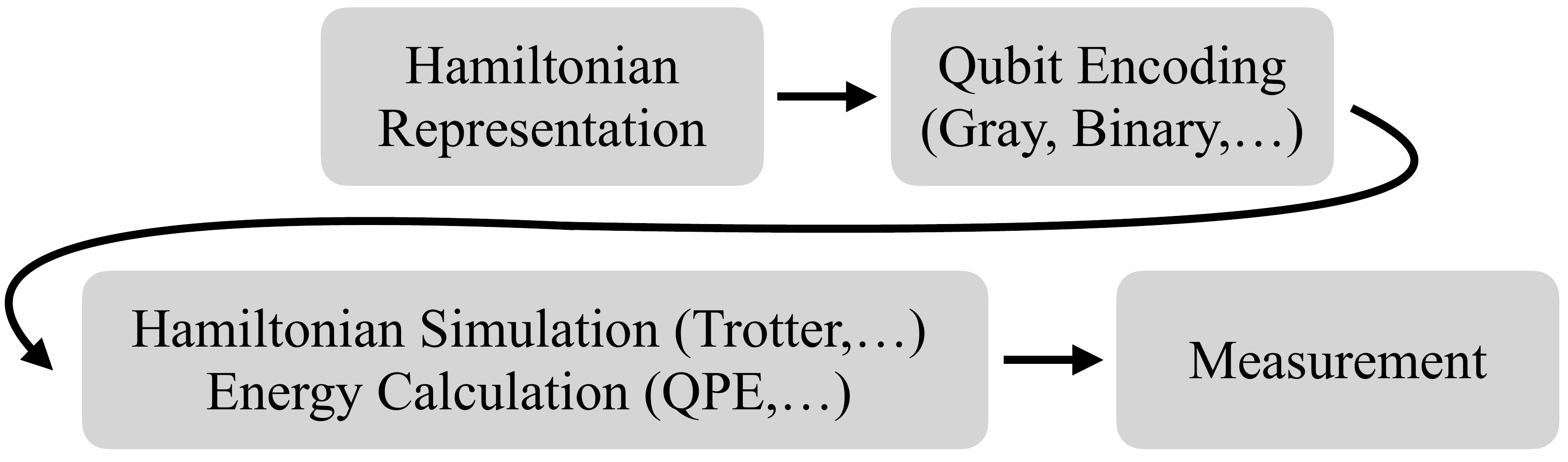}
    \label{fig:diagram}
\end{figure}

Like other quantum algorithms, as illustrated in the diagram above, a quantum algorithm for vibrational structure problems is consisted of several distinct steps. In the following subsections, we will delve into these steps in detail, tying everything together to provide asymptotic resource estimates in \cref{sec:Ham_evol}. 
We should note that a pivotal aspect of Hamiltonian simulation involves determining the parameters, such as the number of Trotter steps, to yield results sufficiently accurate for the application at hand. This may necessitate some preprocessing of the qubit Hamiltonian terms (falling between the second and third steps of the diagram). We will discuss this further in \cref{sec:Improving_Trott}. 


\subsection{The vibrational Hamiltonian}

A general expression for the vibrational Hamiltonian is given by
the multi-mode expansion, called $L$-mode representation, which expands the potential energy surface into a sum of
one-mode potentials, two-mode potentials, three-mode potentials, and so on~\cite{Bowman2003,Li2001}. The Hamiltonian for a system with $L$ modes of vibration each denoted by a variable $Q_{.}$ is given as:
\begin{align}
\mathcal{H}_{\textrm vib}(Q_1, \ldots, Q_L) &=
    - \frac{1}{2} \sum_{l=1}^{L} \frac{\partial^2}{\partial Q_l^2}
    + V(Q_1, \ldots, Q_L)
    \nonumber 
\end{align}
\begin{align}
V(Q_1, \ldots, Q_L) &=
    V_0 + \sum_{l=1}^L V^{[l]}(Q_l) 
    \nonumber \\
    & + \sum_{l<m}^L V^{[l,m]}(Q_l, Q_m) 
    \nonumber \\
    & + \sum_{l<m<n}^L V^{[l,m,n]}(Q_l, Q_m, Q_n) + \dots
\end{align}

\begin{table}[t]
    \caption{The variables and their corresponding meaning}
    \label{tab:Variables naming}
    \centering
    \resizebox{\columnwidth}{!}{%
    \begin{tabular}{cl}
        \hline
        Symbol & Meaning \\
        \hline
         $D$ & Truncation order of the Hamiltonian expansion  \\
         $d$ & Truncation order of the bosonic modes occupation \\
         & (i.e. number of modals)\\
         $L$ & Number of modes \\
         $p$ & Order of the Trotter expansion \\    
    \hline
    \end{tabular}%
    }
\end{table}

The Hamiltonian expansion is truncated at arbitrary order $D$ which can take integer values up to $L$.
The occupation number of each bosonic mode (i.e., the number of modals per mode $d_l$) can vary, and to determine the exact resource requirement, one needs to consider different modals per mode. 
From a chemistry perspective, the distribution of modals per mode depends on the energies of the modes. 
Given that the focus of our work is on estimating an upper bound for quantum resources, we opt to take \mbox{$d_l={\rm{max}}_i \, d_i = d$}, a number of modals that is sufficient to achieve a given chemical accuracy~\cite{hansen2010new, Hansen2008, Heislbetz2010}. This allows us to estimate resource requirements conservatively, ensuring the upper bound is robust.
When specific values of $d_l$ are provided, calculating a more refined resource estimate becomes straightforward. For example, the number of required qubits will be  $\sum_l d_l $ instead of $Ld$.
%
%
The second quantization is introduced through a general set of one-particle basis functions called modals. To each mode, $l$, is then associated a number $d$ of modals $\{\phi_{n_l}\}$. 
The Hamiltonian becomes
\begin{align}
\label{eq:Hamil2ndquan}
&\mathcal{H}_{\textrm vib}^{SQ} =
    \sum_{l=1}^L \sum_{k,h}^{d} \bra{\phi_{k_l}}T(Q_l) + V^{[l]}(Q_l) \ket{ \phi_{h_l} } a_{k_l}^\dagger a_{h_l} \nonumber \\    
    &+ \sum_{l<m}^L \sum_{k,h}^{d} \sum_{k,h}^{d}
    \Big\{\bra{\phi_{k_l} \phi_{k_m}}  V^{[l,m]}(Q_l, Q_m) \ket{\phi_{h_l} \phi_{h_m}} 
    \nonumber \\
     & \left(a_{k_l}^\dagger a_{k_m}^\dagger a_{h_l} a_{h_m} \right) \Big\} + \dots
\end{align}
where $a_{\cdot}^{\dagger}$ and $a_{\cdot}$ are the creation and annihilation operators, respectively, discussed below.
The number of terms in the Hamiltonian then amounts to 
\begin{equation}
    \label{eq:num_terms_asymp}
    N_H = \ord\left((L d^2)^{D}\right)\,.
\end{equation}
The various truncation orders and other fundamental sizing parameters are summarized in \cref{tab:Variables naming}.

Note that in practice, there are two approaches commonly employed to obtain modals and construct the system Hamiltonian: the VSCF method and the Harmonic approximation.
In this work, we use VSCF method based on second quantization~\cite{hansen2010new,csaszar2012} and our focus will be on the anharmonic vibrational wave function, see Section I in the \textit{Supplementary Information} for more details.


\subsection{Qubit encodings}
Quantum operators in the Hamiltonian~\eqref{eq:Hamil2ndquan} act on indistinguishable bosons, which need to be mapped onto distinguishable qubits. 
Qubit encoding facilitates this mapping by transforming the  bosonic Fock space into the qubit Hilbert space, where each bosonic state is represented by a corresponding qubit state.
In this section, we provide a comparison between \textit{unary} and \textit{binary} encodings, see ref.~\cite{sawayanpj2020} for example. Specifically, we focus our discussion on the standard binary encoding. It is important to note that there is an exponential number of ways to encode $d$ integers in $\log_2(d)$ bits allowing for the existence of alternative binary encoding methods. One such example is the Gray code, which ensures the binary representations of neighboring integers have a Hamming distance of one. In practice, the choice of mapping can be customized and optimized according to the specific system of interest. 

With the unary encoding, the many-body basis functions $\phi_{k_1} \cdots \phi_{k_L}$ can be encoded as an \gls{onv} as
\begin{align}
\label{eq:ONV_Christiansen}
  \phi_{k_1} \cdots \phi_{k_L}
                      &\defn
                      \nonumber \\
    &\ket{0_1 \cdots 1_{k_1} \cdots 0_{d},
                                   \cdots , 
                                   0_1 \cdots 1_{k_L} \cdots 0_{d}}.
\end{align}
Based on this representation, the creation and annihilation operators act on modal $k_l$ of mode $l$ as:
\begin{align}
  \label{eq:SQBosonicDefinition}
    a_{k_l}^\dagger \ket{ \cdots, 0_1 \cdots 0_{k_l} \cdots 0_{d}, \cdots} &=  
    \nonumber \\
    &\ket{ \cdots, 0_1 \cdots 1_{k_l} \cdots 0_{d}, \cdots} \nonumber \\
    a_{k_l}^\dagger \ket{ \cdots, 0_1 \cdots 1_{k_l} \cdots 0_{d}, \cdots} &= 0 
    \nonumber \\
    a_{k_l} \ket{ \cdots, 0_1 \cdots 1_{k_l} \cdots 0_{d}, \cdots} & = 
    \nonumber \\
    &\ket{ \cdots, 0_1 \cdots 0_{k_l} \cdots 0_{d}, \cdots} 
    \nonumber \\
    a_{k_l} \ket{ \cdots, 0_1 \cdots 0_{k_l} \cdots 0_{d}, \cdots} &= 0 
\end{align}
with
\begin{equation}
  \begin{aligned}
    \left[ a_{k_l}^\dagger, a_{h_m}^\dagger \right] &= 0\, ,  \quad
    \left[ a_{k_l}, a_{h_m} \right] &= 0 \, , \, \,   \nonumber \\
     \quad
    \left[ a_{k_l}, a_{h_m}^\dagger \right] &= \delta_{l,m} \, , \delta_{k_l,h_m} \,.
  \end{aligned}
  \label{eq:BosonicCommutationRule}
\end{equation}
Therefore the mapping to Pauli operators is straightforward:
\begin{equation}
    a^{\dagger}_{k_l} = \sigma^x_{k_l} - i \sigma^y_{k_l} \, , \quad \text{and} \qquad
    a_{k_l} = \sigma^x_{k_l} + i \sigma^y_{k_l} \, .
\end{equation}

In the case of a binary mapping, the situation is different. 
Any operator acting on a $d$-level system can be written as:
\begin{equation}
    \op{A} = \sum_{l,l'}c_{l,l'}| l \rangle\langle l'| \, , 
    \label{eq:bin_2q_op}
\end{equation}
where $\ket{l}$ is a $\log_2(d)$-long bitstring encoding the binary representation of the integer $l$. Each term can be rewritten in terms of the tensor product of individual qubit states $\ket{l}\bra{l'} = \bigotimes_j \ket{x_j}\bra{x_j'}$.
To map the bosonic operators $a^{\dagger}_{k_l}$ and $a_{k_l}$ to Pauli operators, one first has to translate the products $a^{\dagger}_{k_l}a_{h_l}$, $a^{\dagger}_{k_l}a^{\dagger}_{k_m}a_{h_l}a_{h_m}$, etc, to operators of the form of Eq.~\eqref{eq:bin_2q_op}.
For instance, consider now the \gls{onv} to be $\ket{k_1, \cdots, k_L}$ with each $\ket{k_l}$ the binary representation of integer $k_l$. Then a one-body product can be written as 
\begin{equation}
    a^{\dagger}_{k_l}a_{h_l} = \ket{\cdots, k_l, \cdots}\bra{\cdots, h_l, \cdots}\, .
\end{equation}
Then the following formulae can be employed,
\begin{equation}
  \begin{aligned}
    &\ket{0}\bra{1} = \frac{1}{2} (\sigma^x + i\sigma^y) \, ,  \quad  
    \ket{1}\bra{0} = \frac{1}{2} (\sigma^x - i\sigma^y)\, , \\
    &\ket{0}\bra{0} = \frac{1}{2} (I + \sigma^z)
    \, ,  \qquad 
    \ket{1}\bra{1} = \frac{1}{2} (I - \sigma^z) \, .\\
  \end{aligned}
\end{equation}

The unary encoding requires a total number of qubits \mbox{$N_q = Ld$} whereas any binary encoding leads to \mbox{$N_q = L\log_2(d)$}.
After mapping, the number of Pauli terms in the unary Hamiltonian is given by the number of terms in the second quantized Hamiltonian only, as the unary mapping leads to a constant number of Paulis per term. 
On the other hand, the binary encoding leads to a total number of Pauli terms scaling as \mbox{$\ord(N_H\log_2(d))$}.
For the unary encoding the Pauli terms are $2D$-local, meaning that, at maximum order, they act non-trivially on $2D$ qubits at a time.
In contrast, the binary encoding leads to operators acting on \mbox{$\ord(2 D \log_2(d))$} qubits. 
Note that this is an upper bound as in principle the encoding can be optimized so that terms involving strings with large Hamming distance are associated to negligible Hamiltonian coefficients. The choice of the number of modals $d$ is application specific and is generally determined by the temperature at which the property of interest is studied. For the polyyne molecules that we will discuss in \Cref{sec:RE_polyynes}, $d$ is small and hence unary and binary mapping for total number of Pauli terms and locality of Pauli terms are comparable. 
As we will explore further, in most applications, $D=3$ proves to be a satisfactory choice for vibrational structure problems, resulting in highly local Pauli terms in the Hamiltonian in comparison to the fermionic case. 
Indeed, mapping fermionic to qubit Hamiltonians in electronic structure simulations requires tracking the parity of the occupation numbers of orbitals in order to satisfy anti-commutation relations of the fermionic operators. This leads to creation of Pauli operators that have a locality of $\ord(N)$ where $N$ is the number of qubits required to represent a $N$ spin-orbital system~\cite{Bravyi2002} for e.g. Jordan-Wigner mapping.
While there exist other fermion to qubit mappings that are \mbox{$\ord(\log(N))$} local like the Bravyi-Kitaev mapping, we focus on the Jordan-Wigner mapping for comparison in this discussion as it is more widely used than the other mappings. 


\subsection{Quantum Phase Estimation}
\begin{figure}[t]
    \centering
    \includegraphics[width = 0.49\textwidth]{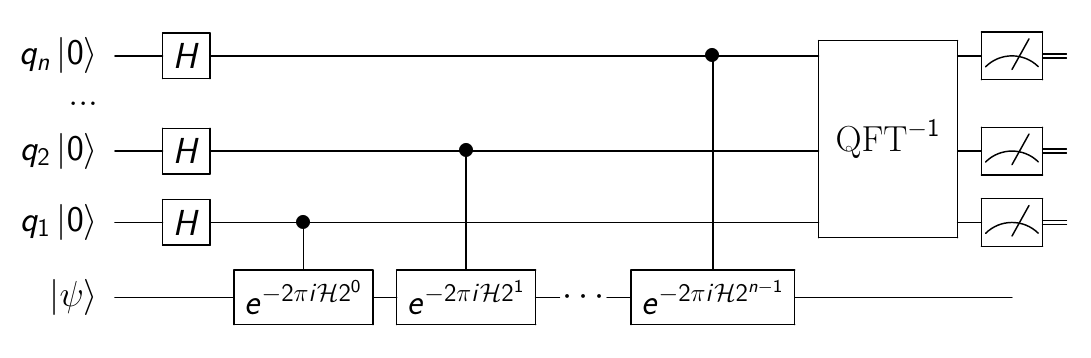}
    \caption{The canonical implementation of the \gls{qpe} algorithm in a quantum circuit}
    \label{fig:qpe}
\end{figure}

Eigenvalues of a given Hamiltonian, $\op{H}$, can be evaluated with the quantum phase estimation (\gls{qpe}) algorithm. The canonical implementation of this algorithm in a quantum circuit is given in \Cref{fig:qpe}. 
First an eigenstate\footnote{In practice, the exact eigenstate is usually not known and an approximation is used instead, slightly degrading the probability of success of the circuit. This is then remedied with multilpe circuit evaluations in the algorithm.} , $\ket{\psi}$, corresponding to eigenvalue $\lambda$, is prepared in the quantum register. 
An ancilla register of $n$
qubits is initialized in full superposition state. 
Then  $\ket{\psi}$ is evolved under powers of $\exp(-2\pi i \op{H})$. 
The evolution operators are controlled by the ancilla qubits as shown in \Cref{fig:qpe}. At this point the state in the quantum register is
\begin{equation}
    \ket{\psi'} = \frac{1}{2^{\frac{n}{2}}}\sum_{x=0}^{2^n} e^{-2\pi i \lambda x}\ket{x}.
\end{equation}
Assuming $\lambda$ belongs to [0,1], it can be approximated by $\gamma = y/2^n$, with $y$ being an integer. 
The inverse \gls{qft} then kicks back the phase to the state of the ancilla register. 
If $\lambda = \gamma$, the inverse \gls{qft} leaves the ancilla register in the product state $\ket{y}$ which is then read with $100\%$ probability of success. 
However, in the more general case where $|\lambda - \gamma| > 0$, $\lambda$ can be obtained with $k$ correct digits and probability of at least 0.75 if the number of ancilla qubits is $n = k + \log_2(4) = k+2$~\cite[Eqn.~(5.35)]{mikeandike}.

To enforce the assumption that the eigenvalue is in $[0,1]$, \gls{qpe} is applied to a shifted and scaled Hamiltonian $\tilde{\op{H}}$ with spectrum contained in $[0,1]$:
$\tilde{\op{H}} = \frac{1}{2\beta}(\op{H} + \beta I)$, where $[-\beta, \beta]$ contains the spectrum of $\op{H}$.
Usually, $\beta$ is simply taken to be the norm of the Hamiltonian $\beta = \norm{\op{H}}$.
In particular, when the Hamiltonian is a weighted sum of Pauli operators such that \mbox{$\op{H} = \sum_i c_i \op{P}_i$}, then \mbox{$\norm{\op{H}} = \sum_i |c_i|$}. 
We call $\epsilon_{\nu}$ the targeted spectroscopic accuracy of our calculations.
In practice, for vibrational structure problems, $\epsilon_{\nu}$ is often set to $1~\text{cm}^{-1}$.
If \gls{qpe} evaluates the eigenvalue of the scaled Hamiltonian with precision $2^{-k}$, then we set $k = \ceiling{\log_2(2\beta\epsilon_{\nu}^{-1})}$ to achieve the desired accuracy.
Combining these requirements, the number of ancilla qubits required in the \gls{qpe} to read the eigenvalue with precision $\epsilon_{\nu}$ and probability of at least 0.75 is given by $n = \ceiling{\log_2(8\beta\epsilon_{\nu}^{-1})}$.


\subsection{Hamiltonian evolution and total gate complexity}
\label{sec:Ham_evol}
Executing \gls{qpe} requires implementing a quantum circuit to perform the (controlled) Hamiltonian evolution
\mbox{$\op{U} := \exp(-2\pi i \tilde{\op{H}})$},
and, in particular, its powers \mbox{$\op{U}^{2^j} = \exp(-2\pi i 2^j \tilde{\op{H}} )$} for $j = 0, 1, \dots, n-1$.
Given an operation, synthesizing a controlled version proceeds by standard methods \citep[\S4.2]{mikeandike}, and so we will focus on the challenges of implementing the uncontrolled evolution operator.
For most systems, the exact mapping of this propagator to a quantum circuit is not known;
however, there exist different algorithms to approximate such evolution.
Trotterization \footnote{The technique has its origins in the 19th century, but it is known in the physics community for Hale Trotter's work in the 1950s.} is a simple, but well-studied, approach, which approximates the ideal propagator operator of a Hamiltonian given as a sum of simpler Hamiltonians (e.g. Pauli strings), whose generated unitaries are easily expressable on a quantum device.
The basic idea of trotterization is to break down the  (long-)time evolution into series of smaller, more manageable steps, using product formulas. For example, the first-order trotterized evolution of a Hamiltonian $\op{G}=\op{H}_1+\op{H}_2$ is given by
\begin{equation}
e^{{-2\pi i}t\op{G}} = e^{-2\pi i t(\op{H}_1+\op{H}_2)}\approx\left(e^{-2\pi i\frac{t}{r}\op{H}_1}e^{-2\pi i\frac{t}{r}\op{H}_2}\right)^r.
\end{equation}
Here $t$ is the time of the evolution, and $r$ is the number of Trotter steps.

For a generic Hermitian operator $\op{G}=\sum_i\op{H}_i$ and product formula of order $p$, the number of Trotter steps $r$ required to approximate the ideal evolution $\exp\group{-2\pi i t \op{G}}$ to accuracy $\epsilon_T$ is typically bounded by
\begin{equation}
\label{eq:trotter_step}
r = 
\ord\group{
      \frac{ \tilde{\alpha}(\op{G})^{\frac{1}{p}}  t^{1+\frac{1}{p}} }{ \epsilon_T^{\frac{1}{p}} }
},
\end{equation}
where $\tilde{\alpha}(\op{G})$ is derived from a bound on the product formula error. Note that, since the units of $\op{G}$ (energy) and $t$ (time) are inversely proportional, the function $\tilde{\alpha}$ must satisfy $\tilde{\alpha}(c\op{G})=c^{p+1}\tilde{\alpha}(\op{G})$, ensuring that rescaling the units of the Hamiltonian evolution has no effect the final (unitless) result.
One straight-forward such bound is $\tilde{\alpha}(\op{G})\leq||\op{G}||^{p+1}$, resulting in the (crude) approximation
\begin{equation}
\label{eq:trotter_step_crude}
r = 
\ord\group{
      \frac{ \norm{\op{G}}^{\frac{p+1}{p}}  t^{1+\frac{1}{p}} }{ \epsilon_T^{\frac{1}{p}} }
}.
\end{equation}
In \Cref{sec:Improving_Trott} we will discuss a more accurate estimation of the (commutator) scaling $\tilde{\alpha}$.

There are different ways to analyze the error in the ultimate output of \gls{qpe} incurred by the approximate evolution of $\op{U}^{2^j}$, $j=0, \cdots, n-1$, and determining a number of Trotter steps, $r$, sufficient for achieving our desired energy accuracy, $\epsilon_{\nu}$.
We discuss two such ways here. The first looks at how the trotterization errors accumulated in the \gls{qpe} circuit affect its probability of producing the correct desired result, while the second discusses how the eigenvalues of the approximation of $\op{U}$ (which we obtain with \gls{qpe}) relate to the actual energies of interest.

In the first approach, each of the $n$ powers of the evolution operator, $\exp(-2\pi i 2^j \tilde{\op{H}})$,  approximated by these product expansions incur an error of $\epsilon_T$, for a total error in the implemented \gls{qpe} circuit of $n \epsilon_T$ (by, for instance \cite[\S4.5.3]{mikeandike})%
\footnote{This assumes that we can implement the individual terms in the product expansion exactly.
Each term is of the form $\exp\group{-2\pi i 2^j \delta_k \op{P}_k}$, where $P_k$ is a Pauli operator and $\delta_k$ is some real value (determined by the time step of the trotterization and the weight of this particular term).
Again, by standard techniques \citep[\S4.7.3]{mikeandike}, this reduces to being able to implement a single-qubit Z-rotation $\exp\group{-2\pi i 2^j \delta_k Z}$.
}.
The result of using a slightly perturbed circuit to implement \gls{qpe} is that the probability of successfully measuring the correct bitstring (the $n$-bit approximation of the eigenvalue) may be degraded.
However, the change in the success probability is bounded by the error in the circuit implementation;
assuming an ideal success probability of $0.75$ as in the previous subsection, the success probability with the approximate circuit $P_{approx}$ satisfies
$|0.75 - P_{approx}| \le 2 n \epsilon_T$
(by e.g. \cite[Eqn.~4.6.2]{mikeandike}).
Consequently, $\epsilon_T$ should be $\ord(\frac{1}{n})$ (specifically, $\epsilon_T \le \frac{1}{8n}$) to maintain a high ($\ge 0.5$) probability of success.
Recall that $n$ is already given as $n = \log_2\group{8\beta\epsilon_{\nu}^{-1}}$.
We combine this with \Cref{eq:trotter_step};
we take $t = 2^j$ and bound this by $2^{n-1} = 4\beta\epsilon_{\nu}^{-1}$, and we note that $\tilde\alpha(\tilde{\op{H}}) = \tilde{\alpha}\group{\frac{\op{H}}{2\beta}} = (2\beta)^{-(p+1)}\tilde{\alpha}(\op{H})$.
Thus $\tilde\alpha(\tilde{\op{H}})^{\frac{1}{p}} = (2\beta)^{-1-\frac{1}{p}}\tilde{\alpha}(\op{H})^{\frac{1}{p}}$, which gives us
\begin{equation}
\label{eq:trotter_number_1}
r = \ord\group{ \tilde{\alpha}(\op{H})^{\frac{1}{p}} (2\epsilon_{\nu}^{-1})^{1 + \frac{1}{p}} \log_2^{\frac{1}{p}}\group{8\beta\epsilon_{\nu}^{-1}} }.
\end{equation}
This is required for each of the $n$ evolution operators, for a total number of Trotter steps in the \gls{qpe} circuit
\begin{equation}
\label{eq:total_depth_1}
R = \ord\group{ \tilde{\alpha}(\op{H})^{\frac{1}{p}} (2\epsilon_{\nu}^{-1})^{1 + \frac{1}{p}} \group{\log_2\group{8\beta\epsilon_{\nu}^{-1}}}^{1+\frac{1}{p}} }.
\end{equation}

\begin{table*}[t]
    \caption{Complexity Summary for a given Hamiltonian $\op{H}$ }
    \label{tab:complexitysummary}
    \centering
    \begin{tabular}{cccc}
        \hline
        Mapping & Gate complexity & Depth complexity & Number of qubits\\
        \hline
        &&&\\
        Unary
            & $C_G^{\text{unary}} =
                \ord\group{
                    p2 D (Ld^2)^D \tilde{\alpha}(\op{H})^{\frac{1}{p}} (\epsilon_{\nu}^{-1})^{1 + \frac{1}{p}}
                }$
                & $\frac{D}{2L}C_G^{\text{unary}}$
                    & $L\times d$ \\
        &&&\\
        Binary
            & $C_G^{\text{binary}} =
                \ord\group{ C_G^{\text{unary}}\log_2(d)^2 }$
                & $\frac{D}{L}C_G^{\text{binary}}$
                    & $L\times\log_2(d)$ \\
        \hline
    \end{tabular}
\end{table*}

Another approach is as follows.
For some choice of order and number of steps, let us define
$\op{U}_T$
as the product expansion approximating the ideal evolution \mbox{$\op{U} = \exp\group{-2\pi i \tilde{\op{H}}}$}.
$\op{U}_T$ is unitary and can therefore be rewritten as 
$\op{U}_T = \exp(-2\pi i \mathcal{H}')$
where $\mathcal{H}'$ is Hermitian.
Hence, we ``perfectly'' implement the evolution for $\op{H}'$, and performing \gls{qpe} by applying powers of $\op{U}_T$ results in the eigenvalues of $\mathcal{H}'$. 
The error $\epsilon$ in the resulting estimate of the energy is the same order as $\epsilon_T$~\cite{babbush2018, reiher2017}.
This implies that we simply take $\epsilon_T = \ord\group{\frac{\epsilon_{\nu}}{2\beta}}$ in order to achieve the targeted spectroscopic accuracy for the original Hamiltonian $\op{H}$.
Using \Cref{eq:trotter_step} again, where instead we take $t = 1$, the result is that we require
\begin{equation}
\label{eq:trotter_number_2}
r = \ord\group{ (2\beta)^{-1}\tilde{\alpha}(\op{H})^{\frac{1}{p}} (\epsilon_{\nu}^{-1})^{\frac{1}{p}} }.
\end{equation}
In this case, $\op{U}_T$ must be repeated $2^0 + 2^1 + \dots +2^{n-1} = 2^n - 1$ times in the \gls{qpe} circuit for a total number of Trotter steps of
\begin{equation}
\notag
R = \ord\group{ 4 \tilde{\alpha}(\op{H})^{\frac{1}{p}} (\epsilon_{\nu}^{-1})^{1 + \frac{1}{p}} },
\end{equation}
which agrees with \Cref{eq:total_depth_1} up to poly-logarithmic terms.

We now estimate the total gate complexity of \gls{qpe}.
The number of terms in each Trotter step depends on the trotterization order $p$ and the number of terms in the Hamiltonian, which in turn depends on the encoding used.
In the case of a unary encoding there are $\ord\big((Ld^2)^D\big)$ Pauli terms with locality $2D$.
The number of quantum gates per Trotter step is then $N_g^{\text{unary}} = \ord\big(p2D(Ld^2)^D\big)$.
On the other hand, with a binary encoding, the number of Pauli terms amounts to  $\ord\big((Ld^2)^D\log_2(d)\big)$ and the locality is $\ord(2D\log_2(d))$.
Hence the number of quantum gates in this case is $N_g^{\text{binary}} = \ord\big(p2D(Ld^2)^D\log_2(d)^2\big)$.
In either case, the total gate complexity is $\ord(R N_g^{\cdot})$.
\Cref{tab:complexitysummary} summarizes the complexity for both encodings.

In practice the depth of the quantum circuit is smaller than the gate complexity since different parts can be executed in parallel. 
For instance, this is the case for trotterized operators acting on different modes.
One can execute in general $L/\mu$ $\mu$-body terms at a time.
Further reductions can be applied in the case of the unary encoding at the modal level since the operators are local.
In this case, the operators acting on the same modes but different modals can also be executed in parallel.
Note that this is not the case in the binary encoding since at the mode level the operators can act on the full set of qubits. 
In the unary encoding this effect accounts for a factor $1/2$ (since the operators always link two modals within one mode).
The final depth complexity is given in \Cref{tab:complexitysummary}.


\section{Estimation of Trotter errors}
\label{sec:Improving_Trott}

As discussed in \Cref{sec:Ham_evol}, correctly determining
the number of Trotter steps required for a sufficiently accurate simulation of the qubit Hamiltonian is crucial for the efficient implementation of the QPE algorithm.
The crude bound provided in \eqref{eq:trotter_step_crude} generally does not take into account commuting terms in the Hamiltonian and therefore significantly over-estimates the Trotter number.
To obtain more precise results, one can use the commutator-scaling approach described by \citet{childs2021theory}.
In this section we give an overview of the main result from \citet{childs2021theory} and discuss techniques for the efficient evaluation of the relevant quantities.

\subsection{Trotter bounds with commutator scaling}
The qubit Hamiltonian is of the form
\begin{equation}
\mathcal{H}
  = \sum_{i=1}^{N_H} \op{H}_i
  = \sum_{i=1}^{N_H} c_i \op{P}_i \, ,
\end{equation}
where $c_i$ are real coefficients, and 
$\op{P}_i=\sigma_{i,1}\otimes\cdots\otimes\sigma_{i,Ld}$ are Pauli string operators
(and, therefore, $\norm{\op{H}_i}=|c_i|$).
A bound on the Trotter number for a $p$-th order product formula is then given in terms of the commutator scaling
\begin{align}
  \label{eq:comm_scaling}
  \tilde{\alpha} &= \alpha_p(S) 
  \nonumber \\
  & = \sum_{i_0, \cdots, i_p \in S}
  \norm{
    [\op{H}_{i_p},
      [\op{H}_{i_{p-1}},
        [\cdots
          [\op{H}_{i_1},\op{H}_{i_0}]
        ]\cdots
      ]
    ]
  }\, ,   
\end{align}
where $S =\{1, \cdots, {N_H}\}$ is the set of all indices for the terms in the Hamiltonian (as we will see below, it is useful to define commutator scaling as a function acting on a particular subset of indices). In particular, we have
\begin{equation}
\label{eq:scaling_trotter_step}
r = \ord\group{
  \frac{ \tilde{\alpha}^\frac{1}{p}
     t^{1+\frac{1}{p}}
  }{
    \epsilon_T^{\frac{1}{p}}
  }
}.
\end{equation}
If we define $N(S) := \sum_{i \in S} |c_i|$, then the bound \eqref{eq:trotter_step_crude} in \Cref{sec:Ham_evol}
is a direct consequence of \eqref{eq:scaling_trotter_step} and 
the crude bound on the commutator scaling given by
\begin{equation}
  \label{eq:crude_bound}
  \alpha_p(S) \leq N(S)^{p+1} = \left(\sum_{i \in S} |c_i|\right)^{p+1}.
\end{equation}

To obtain more precise bounds, and consequently lower Trotter numbers, one has to examine the commutativity relationships between terms in the Hamiltonian. \Cref{sec:nested_commutator} provides a necessary and sufficient condition for a nested commutator, like the ones in the definition \eqref{eq:comm_scaling}, to be non-zero, as well as details on how to efficiently calculate its exact value.
\Cref{sec:splitting_trick} discusses how we can alleviate the high computational cost of examining \emph{every} nested commutator in \eqref{eq:comm_scaling}.

\subsection{Efficient evaluation of nested commutators}
\label{sec:nested_commutator}
We discuss how to calculate a nested commutator of Pauli operators.
As discussed in the previous subsection, nested commutators are important in estimating Trotter numbers.
When the operators involved are (weighted) Pauli string operators, extra structure permits a tidy result on the form of the nested commutator.
Notably, weighted Pauli strings either commute or anti-commute, which is a simple consequence of the (anti-) commutation relations of the basic Pauli operators
(see for instance Equations~(2.74) and (2.75) of \cite{mikeandike}).
Using this fact, the next result implies that a nested commutator of Pauli strings $[\op{P}_p, \dots [\op{P}_1, \op{P}_0]]$ is either zero or $2^p (\op{P}_p \dots \op{P}_1 \op{P}_0)$.

\begin{proposition}
\label{prop:zero_or_not}
Consider a nested commutator of operators that either commute or anti-commute
($\op{P}_i\op{P}_j = \pm \op{P}_j \op{P}_i$ for all $i$, $j$).
If, for all $i \in \set{1,2,\dots, p}$,
$\op{P}_i$ anti-commutes with an odd number of $\op{P}_j$, $j < i$, that is
\begin{align}
  \mathrm{cardinality}\, \group{\set{j : \op{P}_j\op{P}_i = -\op{P}_i\op{P}_j, j<i}} \nonumber \\
  \mod 2 = 1, \, \, \forall i,  \nonumber
\end{align}
then
$[\op{P}_p, \dots [\op{P}_1, \op{P}_0]] = 2^p (\op{P}_p \dots \op{P}_1 \op{P}_0)$.
Otherwise,
$[\op{P}_p, \dots [\op{P}_1, \op{P}_0]] = 0$.
\end{proposition}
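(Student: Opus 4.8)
The plan is to prove the statement by induction on the depth of the nested commutator, tracking at each level a single object whose value is either zero or a scalar multiple of a Pauli string. First I would record the elementary fact already noted in the text: any two Pauli strings $\op{A},\op{B}$ satisfy $\op{A}\op{B}=\pm\op{B}\op{A}$, so that $[\op{A},\op{B}]=0$ when they commute, while $[\op{A},\op{B}]=\op{A}\op{B}-\op{B}\op{A}=2\op{A}\op{B}$ when they anti-commute. This reduces the whole computation to bookkeeping of signs.

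Next I would define the partial commutators $\op{C}_0\defn\op{P}_0$ and $\op{C}_k\defn[\op{P}_k,\op{C}_{k-1}]$ for $k=1,\dots,p$, so that $\op{C}_p$ is the nested commutator of interest. The inductive claim is that $\op{C}_k$ equals either $0$ or $2^k\,\op{P}_k\cdots\op{P}_0$, and that the nonzero case occurs precisely when the parity condition holds for every index $i\le k$. The base case $\op{C}_0=\op{P}_0=2^0\op{P}_0$ is immediate.

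For the inductive step, suppose $\op{C}_{k-1}=2^{k-1}\op{Q}$ with $\op{Q}\defn\op{P}_{k-1}\cdots\op{P}_0$ (the case $\op{C}_{k-1}=0$ propagates trivially, since $[\op{P}_k,0]=0$). The crux is to decide whether $\op{P}_k$ commutes or anti-commutes with the product $\op{Q}$, which is itself a Pauli string up to a phase. Commuting $\op{P}_k$ through $\op{Q}$ factor by factor picks up a sign $(-1)$ for each $\op{P}_j$, $j<k$, that anti-commutes with $\op{P}_k$; hence $\op{P}_k\op{Q}=(-1)^{m}\op{Q}\op{P}_k$, where $m$ is the number of such $j$. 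Therefore $\op{P}_k$ anti-commutes with $\op{Q}$ exactly when $m$ is odd, which is the parity condition of the proposition at index $i=k$. Using the two-term commutator formula from the first step, $\op{C}_k=2^{k-1}[\op{P}_k,\op{Q}]$ equals $2^k\,\op{P}_k\op{Q}=2^k\,\op{P}_k\cdots\op{P}_0$ when $m$ is odd and $0$ when $m$ is even, closing the induction.

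Finally I would assemble the global statement. If the parity condition holds for all $i$, the nonzero branch is taken at every level and $\op{C}_p=2^p\,\op{P}_p\cdots\op{P}_0$. If it fails, let $i^\ast$ be the smallest index at which it fails; then $\op{C}_{i^\ast-1}$ is nonzero but $\op{C}_{i^\ast}=0$, and since a vanishing partial commutator forces all later ones to vanish, $\op{C}_p=0$. I expect the only delicate point to be the sign-counting argument that translates ``$\op{P}_k$ anti-commutes with the accumulated product $\op{Q}$'' into ``$\op{P}_k$ anti-commutes with an odd number of earlier factors''; everything else is routine once the two-term commutator identity and the induction are in place.
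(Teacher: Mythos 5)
Your proposal is correct and follows essentially the same route as the paper's proof: an induction on the nesting depth in which the key step is observing that commuting $\op{P}_k$ through the accumulated product $\op{P}_{k-1}\cdots\op{P}_0$ picks up a sign $(-1)^m$ with $m$ the number of anti-commuting earlier factors, so the two-term commutator is $2\op{P}_k\op{P}_{k-1}\cdots\op{P}_0$ or $0$ according to the parity of $m$. Your explicit partial-commutator notation $\op{C}_k$ is only a cosmetic repackaging of the paper's argument, and both proofs handle the propagation of a vanishing inner commutator identically.
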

\begin{proof}
The proof is by induction;
to start, note that if $\op{P}_1$ and $\op{P}_0$ anti-commute, then 
$[\op{P}_1, \op{P}_0] = \op{P}_1\op{P}_0 - \op{P}_0\op{P}_1 = \op{P}_1\op{P}_0 + \op{P}_1\op{P}_0 = 2\op{P}_1\op{P}_0$.
Otherwise, they commute, and $[\op{P}_1, \op{P}_0] = 0$.
Now, assume that the result holds for $p$.
Assume that for all $i \le p$, $\op{P}_i$ anti-commutes with an odd number of $\op{P}_j$, $j < i$;
then by the induction assumption, it holds that
$[\op{P}_p, \dots [\op{P}_1, \op{P}_0]] = 2^p (\op{P}_p \dots \op{P}_1 \op{P}_0)$.
In this case, note that
\[\begin{aligned}
\relax [\op{P}_{p+1}, [\op{P}_p, \dots [\op{P}_1, \op{P}_0]]]
    = [\op{P}_{p+1}, 2^p \op{P}_p \dots \op{P}_1 \op{P}_0] \\
    = 2^p\big(\op{P}_{p+1} \op{P}_p \dots \op{P}_1 \op{P}_0 - \op{P}_p \dots \op{P}_1 \op{P}_0 \op{P}_{p+1}\big).
\end{aligned}\]
By hypothesis, $\op{P}_{p+1}$ either commutes or anti-commutes with each of $\op{P}_0$, $\op{P}_1, \dots, \op{P}_p$.
Thus, $\op{P}_p \dots \op{P}_1 \op{P}_0 \op{P}_{p+1} = s \op{P}_{p+1} \op{P}_p \dots \op{P}_1 \op{P}_0$,
where $s \in \set{+1, -1}$.
So, if $\op{P}_{p+1}$ anti-commutes with an odd number of them, $\op{P}_p \dots \op{P}_1 \op{P}_0 \op{P}_{p+1} = - \op{P}_{p+1} \op{P}_p \dots \op{P}_1 \op{P}_0$.
Consequently,
\[\begin{aligned}
[\op{P}_{p+1}, [\op{P}_p, \dots [\op{P}_1, \op{P}_0]]] &=  \\
= 2^p\big(\op{P}_{p+1} \op{P}_p \dots &\op{P}_1 \op{P}_0 + \op{P}_{p+1} \op{P}_p \dots \op{P}_1 \op{P}_0\big) \\
&= 2^{p+1}\big(\op{P}_{p+1} \op{P}_p  \dots \op{P}_1 \op{P}_0\big).
\end{aligned}
\]
On the other hand, assume that  for some $i \le p$, $\op{P}_i$ anti-commutes with an even number of $\op{P}_j$, $j < i$;
then by the induction hypothesis, $[\op{P}_p, \dots [\op{P}_1, \op{P}_0]] = 0$.
Therefore $[\op{P}_{p+1}, [\op{P}_p, \dots [\op{P}_1, \op{P}_0]]] = 0$ as well.
If $\op{P}_{p+1}$ anti-commutes with an even number of the previous operators, then $\op{P}_p \dots \op{P}_1 \op{P}_0 \op{P}_{p+1} = \op{P}_{p+1} \op{P}_p \dots \op{P}_1 \op{P}_0$.
Then
$[\op{P}_{p+1}, [\op{P}_p, \dots [\op{P}_1, \op{P}_0]]]
= 2^p\big(\op{P}_{p+1} \op{P}_p \dots \op{P}_1 \op{P}_0 - \op{P}_{p+1} \op{P}_p \dots \op{P}_1 \op{P}_0\big)
= 0$.
The overall result holds for all $p$ by induction.
\end{proof}

If we specifically consider scaled/weighted Pauli strings with nonzero weights, then the conditions of the previous result become necessary and sufficient.

\begin{proposition}
\label{prop:zero_iff}
Consider a nested commutator of weighted Pauli strings with nonzero weights.
Then $[\op{P}_p, \dots [\op{P}_1, \op{P}_0]]=0$
if and only if
for some $i \in \set{1,2,\dots, p}$,
$\op{P}_i$ anti-commutes with an even number of $\op{P}_j$, $j < i$
(that is, $\exists i:$ $\mathrm{cardinality}\group{\set{j : \op{P}_j\op{P}_i = -\op{P}_i\op{P}_j, j<i}} \mod 2 = 0$).
\end{proposition}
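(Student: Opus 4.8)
The plan is to observe that this biconditional is mostly a restatement of \Cref{prop:zero_or_not}, with exactly one genuinely new ingredient required. The ``if'' direction---that the existence of some index $i$ anti-commuting with an even number of its predecessors forces the commutator to vanish---is precisely the ``Otherwise'' clause of \Cref{prop:zero_or_not} (the negation of ``every $\op{P}_i$ anti-commutes with an odd number'' is ``some $\op{P}_i$ anti-commutes with an even number''). So nothing new is needed there. The real work lies in the ``only if'' direction, which I would establish by contrapositive: assuming that \emph{every} $\op{P}_i$ (for $i \ge 1$) anti-commutes with an odd number of the earlier $\op{P}_j$, I must show the nested commutator is \emph{nonzero}. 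Note first that weighted Pauli strings with nonzero weights still satisfy the commute-or-anti-commute hypothesis of \Cref{prop:zero_or_not}, since a nonzero scalar prefactor never alters which of $\pm$ appears; hence that proposition applies verbatim.

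Under the stated hypothesis, \Cref{prop:zero_or_not} gives immediately
\[
[\op{P}_p, \dots [\op{P}_1, \op{P}_0]] = 2^p\,(\op{P}_p \cdots \op{P}_1 \op{P}_0),
\]
so it suffices to argue that the right-hand side does not vanish. This is where the nonzero-weight hypothesis enters essentially. Writing each weighted string as $\op{P}_i = c_i \sigma_i$ with $c_i \neq 0$ and $\sigma_i$ an unweighted Pauli string, the product factors as $\group{\prod_i c_i}\,\sigma_p \cdots \sigma_1 \sigma_0$. The scalar prefactor is nonzero, being a product of nonzero numbers; and the operator factor $\sigma_p \cdots \sigma_1 \sigma_0$ is, up to a global phase in $\set{\pm 1, \pm i}$, itself a Pauli string, hence unitary and in particular never the zero operator. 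The product of a nonzero scalar, a nonzero phase, and a unitary is therefore nonzero, and multiplication by $2^p$ does not change this.

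This closes the contrapositive, and together with the already-established ``if'' direction yields the biconditional. I expect the only delicate point---and the precise reason the hypothesis cannot be relaxed---to be the non-vanishing of the product $\op{P}_p \cdots \op{P}_0$: were any weight allowed to be zero, the product could collapse to the zero operator and the commutator could vanish even when the parity condition holds. Thus the upgrade from the merely \emph{sufficient} vanishing criterion of \Cref{prop:zero_or_not} to a \emph{necessary and sufficient} one rests entirely on forbidding zero weights, and no further structural analysis of the Pauli strings is needed.
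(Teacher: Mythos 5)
Your proposal is correct and follows essentially the same route as the paper: the ``if'' direction is read off from the vanishing clause of \Cref{prop:zero_or_not}, and the ``only if'' direction (which you phrase as a contrapositive and the paper as a contradiction, a purely cosmetic difference) reduces to observing that a product of nonzero-weight Pauli strings is itself a nonzero-weight Pauli string and hence nonzero. Your extra remark that the unweighted product is unitary up to a phase is a slightly more explicit justification of that last step than the paper gives, but the argument is the same.
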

\begin{proof}
Assume that for some $i \in \set{1,2,\dots, p}$,
$\op{P}_i$ anti-commutes with an even number of $\op{P}_j$, $j < i$.
Since weighted Pauli strings either commute or anti-commute, we can apply Proposition~\ref{prop:zero_or_not}, and we get $[\op{P}_p, \dots [\op{P}_1, \op{P}_0]] = 0$.
Conversely, assume that the nested commutator is zero.
For a contradiction, assume that for all $i$, $\op{P}_i$ anti-commutes with an odd number of $\op{P}_j$, $j < i$.
Then again by Proposition~\ref{prop:zero_or_not}, $[\op{P}_p, \dots [\op{P}_1, \op{P}_0]] = 2^p (\op{P}_p \dots \op{P}_1 \op{P}_0)$.
As these are nontrivial Pauli strings, $\op{P}_p \dots \op{P}_1 \op{P}_0$ is also a Pauli string with nonzero weight, and thus $\op{P}_p \dots \op{P}_1 \op{P}_0$ is not zero, which is a contradiction.
\end{proof}

Propositions~\ref{prop:zero_or_not} and \ref{prop:zero_iff} imply that each term in an expression like \eqref{eq:comm_scaling} is either $2^p |c_{i_p}c_{i_{p-1}} \cdots c_{i_0}|$ or zero, and confirming whether it is zero requires at most $p(p+1)/{2}$ checks for (anti)-commutativity.

\subsection{Computing the commutator scaling}
\label{sec:splitting_trick}
As the number of terms in \eqref{eq:comm_scaling} is $N_H^{p+1}$, the exact computation of the commutator scaling can be prohibitively expensive even for small molecules.
Nevertheless, bounds on the commutator scaling itself can be obtained by dividing the set $S=\set{1,\dots,N_H}$ into two subsets based on the absolute magnitude of the terms in them, namely,
\begin{align}
 &\set{1, \cdots, N_H} = S_{big} \cup S_{small} = 
 \nonumber \\
& = \set{\; i \;:\; |c_i|>tol\;}
   \cup \set{\;i \; : \; |c_i|\leq tol\;}.   
\end{align}
We then have
\begin{equation}
  \label{eq:first_approx}
  \begin{split}
&\alpha_p({S_{big}}) \leq \alpha_p(S) \leq
\nonumber \\
& \leq \alpha_p(S_{big}) +
{p+1 \choose 1} N(S_{big})^{p}N(S_{small}) +
\cdots \nonumber \\
& \, \, \, \qquad \quad \cdots +{p+1 \choose p+1}N(S_{small})^{p+1} .
  \end{split}
\end{equation}
Note that setting the tolerance $tol$ to zero, amounts to computing the commutator scaling over the full set ($S_{big}=S$) and the upper and lower bounds in \eqref{eq:first_approx} are equal to exact value of $\tilde{\alpha}$. Using $tol=\infty$ on the other hands gives a lower bound of zero ($S_{big}=\emptyset$) and 
the previously mentioned crude upper bound \eqref{eq:crude_bound}.

One can obtain even tighter bounds by applying the same idea
recursively. For example, for bounds on the second order product formula, instead of using
\begin{equation}
  \begin{split}
   &\sum_{\substack{i\in S_{small}\\ {j_1},{j_2} \in S_{big}}}
    \Big\{
    \norm{[\op{P}_i,[\op{P}_{j_1},\op{P}_{j_2}]]} \, + 
    \norm{[\op{P}_{j_1},[\op{P}_i,\op{P}_{j_2}]]} \, + \\ 
    & \hspace{4.2cm} + 
    \norm{[\op{P}_{j_2},[\op{P}_{j_1},\op{P}_i]]}
  \Big\}
  \\ &  \leq  3N(S_{small})N(S_{big})^2,
  \end{split}
\end{equation}
as we did in order to obtain \eqref{eq:first_approx},
one can more carefully examine the commutation of terms
(since the cost of computing $\alpha_1$ is only quadratic in 
the number of terms in a subset), and obtain the stricter bound
\begin{equation}
  \begin{split}
  &\sum_{\substack{i\in S_{small}\\ {j_1},{j_2} \in S_{big}}}
  \Big\{
    \norm{[\op{P}_i,[\op{P}_{j_1},\op{P}_{j_2}]]} \, + 
    \norm{[\op{P}_{j_1},[\op{P}_i,\op{P}_{j_2}]]} \, + \\ 
     & \hspace{2cm} + 
    \norm{[\op{P}_{j_2},[\op{P}_{j_1},\op{P}_i]]}
  \Big\}
  \\ &\leq 
  \Big\{ N(S_{small})\alpha_1(S_{big}) \, +
  \\ & +  N(S_{big})\Big( \alpha_1(S)-\alpha_1(S_{big})-\alpha_1(S_{small}) \Big) \Big\} \leq
  \\ & \leq 
  N(S_{small})\alpha_1(S_{big})
  + \\
  & \hspace{2cm}  + N(S_{big})\Big(\alpha_1(S)-\alpha_1(S_{big}) \Big).
  \end{split}
\end{equation}

Using these techniques, one can obtain reasonably tight bounds on the commutator scaling, and therefore on the Trotter number needed to simulate a particular Hamiltonian to a prescribed accuracy. In \Cref{sec:RE_polyynes} we provide concrete examples of the savings such techniques provide over the conventional estimate \eqref{eq:trotter_step_crude}.


\section{Quantum resource estimates for polyyne molecules}
\label{sec:RE_polyynes}
\subsection{Motivation}
\label{sec:polyynes_motivation}

Many of the commercial refining and chemical processes are designed to operate at low temperatures and often in presence of catalysts. This aims to minimize the material challenges and cost of the reactor, and reduce the carbon footprint of the process. The operating temperature of a process is indicative of the extent of possible vibrational excitations in a polyatomic molecule.
%
In delayed coking, for example, the commercial thermal process for the decomposition of heavy hydrocarbons to coke is operated at temperatures below 723 K. This suggests an accurate description of the first four or five vibrational energy levels would be needed to better describe the contribution of vibrational excitation to reactivity.
\Gls{fccvd} is an emerging technology operating at similar temperatures. 
\gls{fccvd} thermocatalytically upgrades light hydrocarbon gases (e.g. ethane, propane, ethylene or acetylene) to a more value-added \gls{cnt} or \gls{cf} in catalytic chemical reactions with iron nanoparticles as catalysts. 
\glspl{cnt} possess various interesting chemical and physical properties such as high tensile strength, light weight, and high electrical and thermal conductivity~\cite{dresselhaus2000carbon}.
\glspl{cf} are used as filler materials in composites to improve their mechanical and thermal properties, and carbon composites are an attractive way of sequestering the pyrolysis carbon as infrastructure material~\cite{kinloch2018composites}.

Both gas-phase high-temperature thermal chemistry, as well as catalytic chemistry on nanoparticles, involve rod-like polyyne molecules as intermediates en-route to nano-graphenic templates and \glspl{cnt}.
Polyynes are organic molecules containing alternating carbon single and triple bonds, $(\ce{-C#C-})_n$ with $n>1$.
They belong to very high $D_{\infty h}$ symmetry point group with a large number of low frequency degenerate wagging and degenerate bending modes. 
For example, the degenerate CCC bend frequency is 231~cm$^{-1}$ in 1,3-butadiyne~\cite{NISTWebBook}
and can decrease further with increasing the carbon number in polyynes. 
An accurate description of such low frequency vibrations is essential to understand their reactivity, in particular to obtain the vibrational partition functions necessary to calculate reaction rate constants using transition state theory.
This is challenging in practice, as such high symmetry point groups are characterized by many silent modes that cannot be detected via infrared or Raman spectroscopy, nor are accessible via experiment since polyynes are challenging to stabilize~\cite{chalifoux2010synthesis, milani2011charge}.
Nevertheless, the low-frequency modes are important for accurately calculating the thermodynamic properties.
This reinforces the need for computational techniques which can accurately predict energies of vibrational states and marks vibrational structure calculations as a potential industrial application for quantum computing.

{}

\subsection{Quantum memory footprint}
As the polyyne molecules we consider have linear geometry, the number of modes, $L$, is easily seen to be 
\begin{equation}
  \begin{split}
  \label{eq:num_modes}
   L &= 3N_{atoms}-5 = 3N_{carbon\ atoms}+1 
  \\
  & = 6N_{triple\ bonds}+1.
  \end{split}
\end{equation}
Thus, acetylene, diacetylene, and tri-acetylene have 7, 13, and 19 vibrational modes, respectively.
The number of qubits required for unary and binary mappings, across varying chains of triple bonds in polyyne molecules, depends on the chosen number of modals for each vibrational mode.
Assuming a constant number of modals per mode, the smallest system considered -- a polyyne with one triple bond and 4 modals per mode -- requires 28 qubits for the unary mapping and 14 for the binary mapping.
Similarly, for a larger system, such as a polyyne with 80 triple bonds and 10 modals per mode, 4810 qubits are needed for the unary mapping and 1598 for the binary mapping.



\subsection{Hamiltonian size}

For the remainder of this section, we will set the truncation order of our vibrational Hamiltonian~\eqref{eq:Hamil2ndquan} to $D=3$, which is considered an optimal choice, striking a balance between the desired accuracy and the computational cost required for generating the Hamiltonian~\cite{PhysRevA.104.062419, CARTER19971179, Rauhut2004}.
The asymptotic bound \eqref{eq:num_terms_asymp} on the number of terms in the second-quantized Hamiltonian counts all summands in \eqref{eq:Hamil2ndquan}, 
including those that evaluate to zero. 
While only a fraction of the total number, the number of non-zero coefficients in \eqref{eq:Hamil2ndquan} can quickly explode even for small molecules with reasonable number of modes ($L=7,13,19,\cdots$ in our case) and a limited number of modal functions ($d=4,6,8,10$).\footnote{Possible errors from the choice of basis sets, modal truncation, and related factors have been extensively discussed in the literature. For example, works such as \cite{Sumathy2007, Zapata2021, HeadGordon2023} provide comprehensive analyses of these issues.}
Table~\ref{tab:num_terms} lists the number of non-zero terms $N_{H,\neq0}$ in the second-quantized Hamiltonian in the VSCF basis, as well the number of terms greater than $10^{-8}$~cm$^{-1}$, which will be our cutoff threshold in the qubit Hamiltonian.
\begin{table}
  \caption{Number of terms in the second-quantized Hamiltonian for the polyyne molecules of consideration. $N_H$ is the number of all summands in \eqref{eq:Hamil2ndquan}; $N_{H,\neq0}$ denotes the number of non-zero terms; and $N_{H,\geq10^{-8}}$ denotes the number of terms that exceed $10^{-8}$ by absolute value. Note that the numbers here are for the unary encoding.}
  \label{tab:num_terms}
  \centering
  \resizebox{\columnwidth}{!}{%
  \begin{tabular}{lcrrr}
    \hline
    Molecule & Modals ($d$) & $N_H$ & $N_{H,\neq0}$ & $N_{H,\geq10^{-8}}$ \\
    \hline
    acetylene & 4 & 148848 & 37567 & 15656 \\
    acetylene & 6 & 1660428 & 540684 & 163386 \\
    diacetylene & 4 & 1191632 & 327744 & 117504 \\
    diacetylene & 6 & 13445172 & 4860468 & 1277408\\
    triacetylene & 4 & 4013104 & 1255312 & 392952\\
    triacetylene & 6 & 45431964 & 17989164 & 4290852\\
    tetraacetylene & 4 & 9498000 & 3552748 & 916982 \\
    \hline
  \end{tabular}%
  }
  \end{table}


\subsection{Trotterization cost}
To keep the number of terms in the qubit Hamiltonian manageable, we favor the unary encoding
\eqref{eq:ONV_Christiansen} versus the binary one\footnote{While this increases the number of qubits we need, the effect is not substantial, since a small number of modal basis functions $d \le 10$ is sufficient in the context discussed in \Cref{sec:polyynes_motivation}} and further discard from the qubit Hamiltonian any terms with absolute value less than $10^{-8}$.
This results in qubit Hamiltonian operators with number of Pauli strings roughly equal to the values of $N_{H,\ge 10^{-8}}$ reported in \Cref{tab:num_terms}.
While these numbers are significantly less than the total number of terms $N_H$, they still grow as expected, namely as (a fraction of) $L^3d^6$ (since we have fixed $D=3$).
Thus, the techniques of \Cref{sec:splitting_trick} are necessary to accurately estimate the trotterization cost for these operators.
\Cref{fig:second_order_trotter} reports the total number of Trotter steps needed to guarantee an accuracy $\epsilon_{\nu}=1~cm^{-1}$, in the energies obtained with the workflow described in \Cref{sec:QC_algo}, for the second order ($p=2$) trotterization formula \footnote{For the case $p=2$, the small constant pre-factors can be tracked akin to the analysis in Section 5.1 of~\cite{childs2021theory}. It should be noted that the numbers we report here are still only upper bounds - due to the applications of the triangle inequality - for the number of Trotter steps required.}.
The values in red correspond to bounds obtained with \eqref{eq:crude_bound}, while values in green are obtained with the techniques described in \Cref{sec:splitting_trick}.

As expected there are significant savings in estimating the Trotter number more accurately. What is more important is that the growth of the green curve in \Cref{fig:second_order_trotter} seems to be of lower-order, suggesting that the savings are more than just a constant factor.
Indeed, fitting a polynomial of the type $AL^3+BL^2$, which is the expected growth in \eqref{eq:num_terms_asymp}, results in a negative leading coefficient for the values in green, obtained using the commutator-scaling approach.This suggests that these bounds grow only quadratically and the savings when compared to the crude bounds in red (which do grow cubically) will be even more noticeable for larger molecules.

\subsection{Circuit parallelization}
The concrete size of the Hamiltonian and the total number of Trotter steps in the \gls{qpe} implementation, discussed in the subsections above, allow us to directly compute the number of gates $N_{g}$ for the Hamiltonian simulation part of the \gls{qpe} circuit
\begin{equation}
  N_{g} = \frac{gates} {per\ term}\times\frac{terms} {per\ Trotter\ step}\times \,  N_{Trott}\, , 
\end{equation}
where $N_{Trott}$ is the total number of Trotter steps required. 
An accurate estimate for the cost of Hamiltonian evolution in \gls{qpe}, along with the number of extra ancilla qubits chosen for the accuracy of the algorithm, allows one to fully describe the complexity for the full \gls{qpe} circuit in \Cref{fig:qpe}.

\begin{figure}
  \centering
  \includegraphics[width=0.5\textwidth]{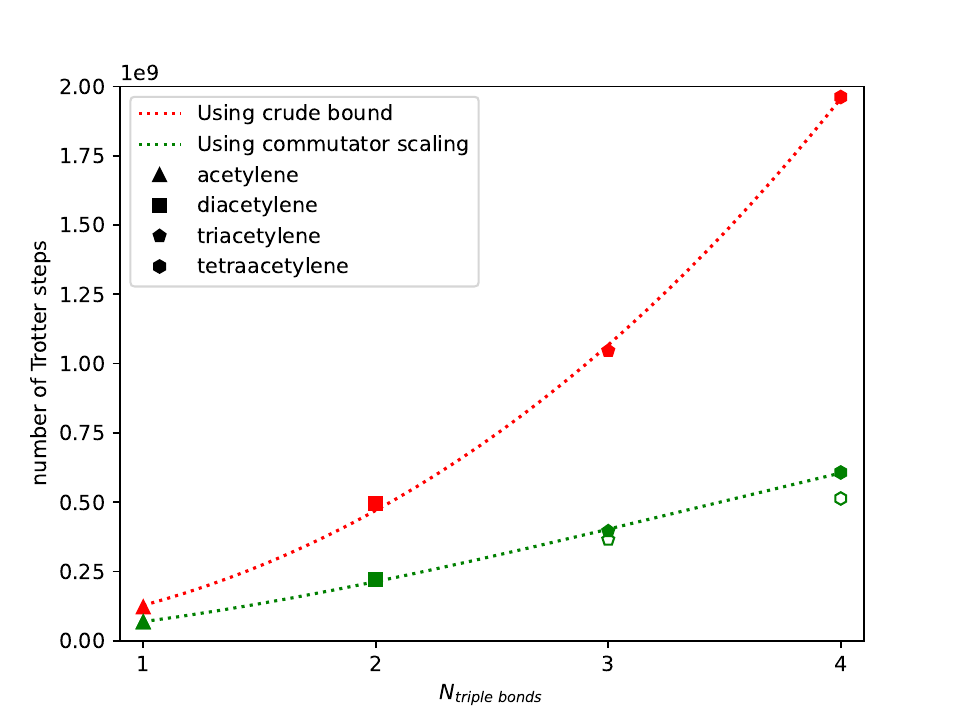}
  \caption{
  Estimates for the total number of Trotter steps required to achieve accuracy of $\epsilon_{\nu}=1~cm^{-1}$ for the first few polyyne molecules of interest. Crude bound estimates are plotted in red and bounds using commutator scaling - in green. The solid green symbols represent an upper bound, while lower bounds are plotted with green symbol outlines. The dashed lines plot a fit to a polynomial of the type $AL^3+BL^2 = A(6N_{triple\ bonds}+1)^3+B(6N_{triple\ bonds}+1)^2$, as expected from our asymptotic results.
  It should be noted that the best fit for the values obtained with the commutator scaling has a negative leading coefficient, suggesting that, when commutations are accounted for, the Trotter number grows only quadratically.}
  \label{fig:second_order_trotter}
\end{figure}

The number of gates is not the sole metric for complexity. Another factor to consider in estimating the resources required for simulating each Trotter step is the depth of the circuit. Using a greedy algorithm as discussed below, we can estimate a ratio of the total number of Paulis to be simulated in each Trotter step to the number of Paulis that need to be simulated sequentially due to having overlapping support on the qubits.  This ratio is a reflection of the effective reduction in circuit depth that can be obtained due to the different localities observed in the Pauli strings of the Hamiltonian. 

To estimate such ratio, a Pauli string from the pool of Pauli strings to be used in the Trotter expansion is drawn at random and the set of qubits over which it acts non-trivially is compared to that of another randomly drawn Pauli string. If two sets of qubits are disjoint, then these Pauli strings can be executed simultaneously in a quantum circuit leading to no increase in depth. If the qubit sets are overlapping then the count of layers to be executed sequentially is increased by one. This procedure is continued until the Pauli pool is exhausted. A specific value for the speedup is then obtained by taking the ratio of total number of Paulis to the number of Pauli layers that need to be executed sequentially. The average expected speedup from such greedy algorithm can be computed by repeating the procedure multiple times and averaging the results.

Below we will compare these numbers for the electronic and vibrational structure problems on the acetylene molecule. Since the maximum locality of the Pauli strings in vibrational structure Hamiltonian is governed by the degree of truncation, it is independent of the number of modals in the problem. This leads to a favourable estimate of roughly 1.5 Pauli strings per unit of circuit depth for the vibrational structure Hamiltonian of the acetylene molecule truncated at the 3rd order with different number of modals. This ratio increases as the number of modals increases as expected. In comparison, the maximum locality of the Pauli strings in the electronic structure problem for most commonly used fermion-to-qubit mappings is equal to the number of qubits. On average this leads to fewer Pauli strings that can be implemented simultaneously in a quantum circuit. For the acetylene electronic structure Hamiltonian in different basis sets we find that roughly 1.1 Pauli strings can be executed at the expense of the resources required for simulation of 1 Pauli string. The exact numbers obtained from the implementation of the greedy algorithm discussed above are presented in Table \ref{tab: Depth_speedup}.

\begin{table}[h]
\caption{Ratio of total number of Paulis to number of Paulis that need to be simulated sequentially averaged over 100 runs, each with a different random order of selecting Paulis for the electronic and vibrational Hamiltonian of acetylene molecule.}
\label{tab: Depth_speedup}
\begin{center}
\resizebox{\columnwidth}{!}{%
\begin{tabular}{cccc} 
 \hline
 Electronic problem & Ratio & Vibrational problem & Ratio \\ [0.5ex] 
 \hline
 sto-3g & 1.14539 & 4 modals & 1.34023 \\ 
 
 6-31g & 1.15238 & 6 modals & 1.46114 \\
 cc-pvdz & 1.16602 & 8 modals & 1.60766 \\
 \hline
\end{tabular}%
}
\end{center}
\end{table}


\section{Summary and conclusion}
\label{sec:Conclusion}
We have provided an accurate asymptotic analysis of quantum resources required for simulating molecular vibrational structure problems on quantum computers. 
We have focused on an explicit Hamiltonian model, the L-mode representation of the Hamiltonian.
The estimates for the quantum resources required for this Hamiltonian representation have been given in Table~\ref{tab:complexitysummary}.
We have used the commutator-scaling approach~\cite{childs2021theory} and devised an efficient scheme to estimate the nested commutators. This has led to a tight(er) upper bound on the Trotter number requisite for simulating a particular Hamiltonian, for a given accuracy. 
Employing this technique, we have provided a detailed quantitative study of the quantum computational cost for concrete examples, namely, polyyne molecules. 

Overall results suggest that the combined quantum resources required for achieving quantum advantage for vibrational structure problems might be lower than those for the electronic structure problems, which is in agreement with~\cite{PhysRevA.104.062419}. Since Pauli strings for the vibrational qubit Hamiltonian are more localized, they are expected to be easier and faster to simulate than those in the electronic qubit Hamiltonian. In addition, our estimate of the number of Pauli strings that can be simultaneously executed in a quantum circuit when implementing a single Trotter step is larger for the vibrational structure problem as compared to the electronic one. This may give the molecular vibrational simulations an edge over the electronic simulations due to higher locality of Pauli strings.

Nevertheless, further tailored studies are needed to truly compare the quantum computational cost of vibrational and electronic structure problems. Just as \textit{basis functions} are employed in electronic structure, \textit{modals} are utilized in the context of the vibrational structure simulations for the same purpose, see \eqref{eq:Hamil2ndquan}. Often, fewer modals are needed than basis functions. Nevertheless, as shown in Table~\ref{tab:complexitysummary}, their number greatly affects the required quantum resources. Furthermore, the number of modals required for a vibrational structure problem increases with temperature.
While the number of relevant terms in the vibrational qubit Hamiltonian may be smaller from an asymptotic perspective, the situation is not as clean-cut for molecules that we can reasonably expect to be able to study on quantum devices in the near term, see Table~5 in the \textit{Supplementary Information}. 
Determining the chemical components and properties where either vibrational or electronic structure simulation can lead to quantum advantage requires a thorough case-by-case study.

It should be noted that alternative approaches to Hamiltonian simulation exist beyond product formulas/Trotter-Suzuki formula. Notably, methods based on the Linear Combination of Unitaries (LCU)~\cite{LCU2012, LCU2024} and qubitization~\cite{Qubitization2019} offer promising alternatives. Our focus, however, is on the Trotter-Suzuki formula due to its extensive study in the literature for electronic structure problems and the fact that it provides loose bounds for resource estimates compared to the actual requirements. Additionally, it does not necessitate extra ancilla qubits. While approaches like LCU and qubitization have demonstrated lower or even optimal theoretical asymptotic computational costs, they demand significantly more computational resources in terms of ancilla qubits and control operations than product formulas, particularly for electronic and vibrational Hamiltonians with large number of terms in the Hamiltonian. For instance, the number of ancillas required for LCU-based approaches scales as $log(N_H)$, potentially leading to a substantial increase in physical qubits and gate operations for fault-tolerant architectures, see~\cite{qcentricmaterial} for more details.
Moreover, a systematic comparison of the pre-factors involved in these methods for specific problems is necessary, which is beyond the scope of the present work.

In light of the discussion presented in this paper, in our opinion, it is not yet clear what is the best frame for comparing molecular electronic and vibrational structure problems, in order to determine which one would offer quantum advantage first. Rather than comparing the quantum resource requirements for problems resulting in similarly sized Hilbert spaces, one can, for example, compare it for selected sets of small challenging problems from each domain (and their respective chemical accuracy).
Considering that the electronic structure problems on quantum computers are somewhat well-understood, however, for such comparison to be meaningful we will need a better understanding of possible optimizations that could improve the qubit numbers and the circuit depth for the vibrational structure simulations as well.



\section*{Acknowledgments}
We gratefully acknowledge Mario Motta, Ivano Tavernelli, Gavin Jones, Ieva Liepuoniute, Alberto Baiardi, Spencer T Stober, and Panagiotis Kl Barkoutsos for helpful discussions and critical comments on the manuscript.

\bibliographystyle{unsrtnat}
\bibliography{references}

\begin{thebibliography}{54}
\providecommand{\natexlab}[1]{#1}
\providecommand{\url}[1]{\texttt{#1}}
\expandafter\ifx\csname urlstyle\endcsname\relax
  \providecommand{\doi}[1]{doi: #1}\else
  \providecommand{\doi}{doi: \begingroup \urlstyle{rm}\Url}\fi

\bibitem[Marzari et~al.(2021)Marzari, Ferretti, and Wolverton]{Marzari2021}
Nicola Marzari, Andrea Ferretti, and Chris Wolverton.
\newblock Electronic-structure methods for materials design.
\newblock \emph{Nature Materials}, 20\penalty0 (6):\penalty0 736--749, 2021.
\newblock \doi{https://doi.org/10.1038/s41563-021-01013-3}.

\bibitem[Helgaker et~al.(2000)Helgaker, J\o{}rgensen, and Olsen]{MESTheory}
Trygve Helgaker, Poul J\o{}rgensen, and Jeppe Olsen.
\newblock \emph{Molecular Electronic‐Structure Theory}.
\newblock John Wiley \& Sons, Ltd, 2000.
\newblock ISBN 9781119019572.
\newblock \doi{https://doi.org/10.1002/9781119019572}.

\bibitem[Leach(2001)]{Leach_CC}
Andrew~R. Leach.
\newblock \emph{Molecular Modelling: Principles and Applications}.
\newblock Pearson, 2001.
\newblock ISBN 978-0582382107.
\newblock \doi{https://doi.org/10.1021/ci9804241}.

\bibitem[Cramer(2004)]{Cramer_CC}
Cristopher~J. Cramer.
\newblock \emph{Essentials of Computational Chemistry: Theories and Models}.
\newblock John Wiley \& Sons, Ltd, 2004.
\newblock ISBN 978-0-470-09182-1.
\newblock \doi{https://doi.org/10.1021/ci010445m}.

\bibitem[Jensen(2017)]{Jensen_CC}
Frank Jensen.
\newblock \emph{Introduction to Computational Chemistry}.
\newblock John Wiley \& Sons, Ltd, 2017.
\newblock ISBN 978-1-118-82599-0.

\bibitem[Aspuru-Guzik et~al.(2005)Aspuru-Guzik, Dutoi, Love, and
  Head-Gordon]{doi:10.1126/science.1113479}
Alán Aspuru-Guzik, Anthony~D. Dutoi, Peter~J. Love, and Martin Head-Gordon.
\newblock Simulated quantum computation of molecular energies.
\newblock \emph{Science}, 309\penalty0 (5741):\penalty0 1704--1707, 2005.
\newblock \doi{https://doi.org/10.1126/science.1113479}.

\bibitem[Liu et~al.(2022)Liu, Low, Steiger, H{\"a}ner, Reiher, and
  Troyer]{Liu2022}
Hongbin Liu, Guang~Hao Low, Damian~S. Steiger, Thomas H{\"a}ner, Markus Reiher,
  and Matthias Troyer.
\newblock Prospects of quantum computing for molecular sciences.
\newblock \emph{Materials Theory}, 6\penalty0 (1):\penalty0 11, 2022.
\newblock \doi{https://doi.org/10.1186/s41313-021-00039-z}.

\bibitem[Cao et~al.(2019)Cao, Romero, Olson, Degroote, Johnson, Kieferov{\'a},
  Kivlichan, Menke, Peropadre, Sawaya, Sim, Veis, and Aspuru-Guzik]{Cao2019}
Yudong Cao, Jonathan Romero, Jonathan~P. Olson, Matthias Degroote, Peter~D.
  Johnson, M{\'a}ria Kieferov{\'a}, Ian~D. Kivlichan, Tim Menke, Borja
  Peropadre, Nicolas P.~D. Sawaya, Sukin Sim, Libor Veis, and Al{\'a}n
  Aspuru-Guzik.
\newblock Quantum chemistry in the age of quantum computing.
\newblock \emph{Chemical Reviews}, 119\penalty0 (19):\penalty0 10856--10915,
  2019.
\newblock \doi{https://doi.org/10.1021/acs.chemrev.8b00803}.

\bibitem[Lanyon et~al.(2010)Lanyon, Whitfield, Gillett, Goggin, Almeida,
  Kassal, Biamonte, Mohseni, Powell, Barbieri, Aspuru-Guzik, and
  White]{Lanyon2010}
Benjamin~P. Lanyon, James~D. Whitfield, Geoff~G. Gillett, Michael~E. Goggin,
  Marcelo~P. Almeida, Ivan Kassal, Jacob~D. Biamonte, Masoud Mohseni, Ben~J.
  Powell, Marco Barbieri, Alán Aspuru-Guzik, and Andrew~G. White.
\newblock Towards quantum chemistry on a quantum computer.
\newblock \emph{Nature Chemistry}, 2\penalty0 (2):\penalty0 106--111, 2010.
\newblock \doi{https://doi.org/10.1038/nchem.483}.

\bibitem[Kandala et~al.(2017)Kandala, Mezzacapo, Temme, Takita, Brink, Chow,
  and Gambetta]{Kandala2017}
Abhinav Kandala, Antonio Mezzacapo, Kristan Temme, Maika Takita, Markus Brink,
  Jerry~M. Chow, and Jay~M. Gambetta.
\newblock Hardware-efficient variational quantum eigensolver for small
  molecules and quantum magnets.
\newblock \emph{Nature}, 549\penalty0 (7671):\penalty0 242--246, 2017.
\newblock \doi{https://doi.org/10.1038/nature23879}.

\bibitem[Peruzzo et~al.(2014)Peruzzo, McClean, Shadbolt, Yung, Zhou, Love,
  Aspuru-Guzik, and O'Brien]{Peruzzo2014}
Alberto Peruzzo, Jarrod McClean, Peter Shadbolt, Man-Hong Yung, Xiao-Qi Zhou,
  Peter~J. Love, Al{\'a}n Aspuru-Guzik, and Jeremy~L. O'Brien.
\newblock A variational eigenvalue solver on a photonic quantum processor.
\newblock \emph{Nature Communications}, 5\penalty0 (1):\penalty0 4213, 2014.
\newblock \doi{https://doi.org/10.1038/ncomms5213}.

\bibitem[Cerezo et~al.(2021)Cerezo, Arrasmith, Babbush, Benjamin, Endo, Fujii,
  McClean, Mitarai, Yuan, Cincio, and Coles]{Cerezo2021}
M.~Cerezo, Andrew Arrasmith, Ryan Babbush, Simon~C. Benjamin, Suguru Endo,
  Keisuke Fujii, Jarrod~R. McClean, Kosuke Mitarai, Xiao Yuan, Lukasz Cincio,
  and Patrick~J. Coles.
\newblock Variational quantum algorithms.
\newblock \emph{Nature Reviews Physics}, 3\penalty0 (9):\penalty0 625--644,
  2021.
\newblock \doi{https://doi.org/10.1038/s42254-021-00348-9}.

\bibitem[Reiher et~al.(2017)Reiher, Wiebe, Svore, Wecker, and
  Troyer]{reiher2017}
Markus Reiher, Nathan Wiebe, Krysta~M. Svore, Dave Wecker, and Matthias Troyer.
\newblock Elucidating reaction mechanisms on quantum computers.
\newblock \emph{Proceedings of the national academy of sciences}, 114\penalty0
  (29):\penalty0 7555--7560, 2017.
\newblock \doi{https://doi.org/10.1073/pnas.1619152114}.

\bibitem[Laidler(1987)]{Laidler}
Keith~James Laidler.
\newblock \emph{Chemical Kinetics}.
\newblock Prentice Hall, 3rd Revised ed edition, 1987.
\newblock ISBN 978-0060438623.

\bibitem[Sumathi and Green~Jr.(2002)]{Sumathi2002}
R.~Sumathi and William~H. Green~Jr.
\newblock A priori rate constants for kinetic modeling.
\newblock \emph{Theoretical Chemistry Accounts}, 108:\penalty0 187--213, 2002.
\newblock \doi{https://doi.org/10.1007/s00214-002-0368-4}.

\bibitem[Sawaya et~al.(2021)Sawaya, Paesani, and Tabor]{PhysRevA.104.062419}
Nicolas P.~D. Sawaya, Francesco Paesani, and Daniel~P. Tabor.
\newblock Near- and long-term quantum algorithmic approaches for vibrational
  spectroscopy.
\newblock \emph{Phys. Rev. A}, 104:\penalty0 062419, 2021.
\newblock \doi{https://doi.org/10.1103/PhysRevA.104.062419}.

\bibitem[Sawaya et~al.(2020)Sawaya, Menke, Kyaw, Johri, Aspuru-Guzik, and
  Guerreschi]{sawayanpj2020}
Nicolas P.~D. Sawaya, Tim Menke, Thi~Ha Kyaw, Sonika Johri, Alán Aspuru-Guzik,
  and Gian~Giacomo Guerreschi.
\newblock Resource-efficient digital quantum simulation of d-level systems for
  photonic, vibrational, and spin-s hamiltonians.
\newblock \emph{npj Quantum Information}, 6\penalty0 (49), 2020.
\newblock \doi{https://doi.org/10.1038/s41534-020-0278-0}.

\bibitem[Sawaya and Huh(2019)]{sawayaJPCL2019}
Nicolas P.~D. Sawaya and Joonsuk Huh.
\newblock Quantum algorithm for calculating molecular vibronic spectra.
\newblock \emph{J. Phys. Chem. Lett.}, 10\penalty0 (13):\penalty0 3586–3591,
  2019.
\newblock \doi{https://doi.org/10.1021/acs.jpclett.9b01117}.

\bibitem[Ollitrault et~al.(2020)Ollitrault, Baiardi, Reiher, and
  Tavernelli]{ollitraultCS2020}
Pauline~J. Ollitrault, Alberto Baiardi, Markus Reiher, and Ivano Tavernelli.
\newblock Hardware efficient quantum algorithms for vibrational structure
  calculations.
\newblock \emph{Chem. Sci.}, 11:\penalty0 6842--6855, 2020.
\newblock \doi{https://doi.org/10.1039/D0SC01908A}.

\bibitem[Sparrow et~al.(2018)Sparrow, Martín-López, Maraviglia, Neville,
  Harrold, Carolan, Joglekar, Hashimoto, Matsuda, O’Brien, Tew, and
  Laing]{sparrowNature2018}
C.~Sparrow, E.~Martín-López, N.~Maraviglia, A.~Neville, C.~Harrold,
  J.~Carolan, Y.~N. Joglekar, T.~Hashimoto, N.~Matsuda, J.~L. O’Brien, D.~P.
  Tew, and A.~Laing.
\newblock Simulating the vibrational quantum dynamics of molecules using
  photonics.
\newblock \emph{Nature}, 557\penalty0 (7707):\penalty0 660--667, 2018.
\newblock \doi{https://doi.org/10.1038/s41586-018-0152-9}.

\bibitem[Magann et~al.(2021)Magann, Grace, Rabitz, and Sarovar]{magannPRR2021}
Alicia~B. Magann, Matthew~D. Grace, Herschel~A. Rabitz, and Mohan Sarovar.
\newblock Digital quantum simulation of molecular dynamics and control.
\newblock \emph{Phys. Rev. Research}, 3:\penalty0 023165, 2021.
\newblock \doi{https://doi.org/10.1103/PhysRevResearch.3.023165}.

\bibitem[Majland et~al.(2023)Majland, Jensen, Højlund, Zinner, and
  Christiansen]{majland2022}
Marco Majland, Rasmus~B. Jensen, Mads~G. Højlund, Nikolaj~T. Zinner, and Ove
  Christiansen.
\newblock Optimizing the number of measurements for vibrational structure on
  quantum computers: coordinates and measurement schemes.
\newblock \emph{Chem. Sci.}, 14:\penalty0 7733--7742, 2023.
\newblock \doi{https://doi.org/10.1039/D3SC01984E}.

\bibitem[Jahangiri et~al.(2020)Jahangiri, Arrazola, Quesada, and
  Delgado]{jahangiriPCCP2020}
Soran Jahangiri, Juan~M. Arrazola, Nicolás Quesada, and Alain Delgado.
\newblock Quantum algorithm for simulating molecular vibrational excitations.
\newblock \emph{Phys. Chem. Chem. Phys.}, 22:\penalty0 25528--25537, 2020.
\newblock \doi{https://doi.org/10.1039/D0CP03593A}.

\bibitem[Richerme et~al.(2023)Richerme, Revelle, Saha, Lopez-Ruiz, Dwivedi,
  Norrell, Yale, Lobser, Burch, Clark, Smith, Sabry, and Iyengar]{richerme2023}
Philip Richerme, Melissa~C. Revelle, Debadrita Saha, Miguel~A. Lopez-Ruiz,
  Anurag Dwivedi, Sam~A. Norrell, Christopher~G. Yale, Daniel Lobser, Ashlyn~D.
  Burch, Susan~M. Clark, Jeremy~M. Smith, Amr Sabry, and Srinivasan~S. Iyengar.
\newblock Quantum computation of hydrogen bond dynamics and vibrational
  spectra.
\newblock \emph{J. Phys. Chem. Lett.}, 14:\penalty0 7256--7263, 2023.
\newblock \doi{https://doi.org/10.1021/acs.jpclett.3c01601}.

\bibitem[Dalzell et~al.(2023)Dalzell, McArdle, Berta, Bienias, Chen, Gilyén,
  Hann, Kastoryano, Khabiboulline, Kubica, Salton, Wang, and
  Brandão]{dalzell2023}
Alexander~M. Dalzell, Sam McArdle, Mario Berta, Przemyslaw Bienias, Chi-Fang
  Chen, András Gilyén, Connor~T. Hann, Michael~J. Kastoryano, Emil~T.
  Khabiboulline, Aleksander Kubica, Grant Salton, Samson Wang, and Fernando G.
  S.~L. Brandão.
\newblock Quantum algorithms: A survey of applications and end-to-end
  complexities.
\newblock \emph{arXiv:2310.03011}, 2023.
\newblock \doi{https://doi.org/10.48550/arXiv.2310.03011}.

\bibitem[MacDonell et~al.(2023)MacDonell, Navickas, Wohlers-Reichel, Valahu,
  Rao, Millican, Currington, Biercuk, Tan, Hempel, and Kassal]{D3SC02453A}
Ryan~J. MacDonell, Tomas Navickas, Tim~F. Wohlers-Reichel, Christophe~H.
  Valahu, Arjun~D. Rao, Maverick~J. Millican, Michael~A. Currington, Michael~J.
  Biercuk, Ting~Rei Tan, Cornelius Hempel, and Ivan Kassal.
\newblock Predicting molecular vibronic spectra using time-domain analog
  quantum simulation.
\newblock \emph{Chem. Sci.}, 14:\penalty0 9439--9451, 2023.
\newblock \doi{https://doi.org/10.1039/D3SC02453A}.

\bibitem[Childs et~al.(2021)Childs, Su, Tran, Wiebe, and Zhu]{childs2021theory}
Andrew~M Childs, Yuan Su, Minh~C Tran, Nathan Wiebe, and Shuchen Zhu.
\newblock Theory of trotter error with commutator scaling.
\newblock \emph{Physical Review X}, 11\penalty0 (1):\penalty0 011020, 2021.
\newblock \doi{https://doi.org/10.1103/PhysRevX.11.011020}.

\bibitem[Bravyi et~al.(2024)Bravyi, Cross, Gambetta, Maslov, Rall, and
  Yoder]{LDPC2024}
Sergey Bravyi, Andrew~W. Cross, Jay~M. Gambetta, Dmitri Maslov, Patrick Rall,
  and Theodore~J. Yoder.
\newblock High-threshold and low-overhead fault-tolerant quantum memory.
\newblock \emph{Nature}, 627:\penalty0 778--782, 2024.
\newblock \doi{https://doi.org/10.1038/s41586-024-07107-7}.

\bibitem[Hansen et~al.(2010)Hansen, Sparta, Seidler, Toffoli, and
  Christiansen]{hansen2010new}
Mikkel~B Hansen, Manuel Sparta, Peter Seidler, Daniele Toffoli, and Ove
  Christiansen.
\newblock New formulation and implementation of vibrational self-consistent
  field theory.
\newblock \emph{Journal of chemical theory and computation}, 6\penalty0
  (1):\penalty0 235--248, 2010.
\newblock \doi{https://doi.org/10.1021/ct9004454}.

\bibitem[Lee et~al.(1995)Lee, Martin, and Taylor]{Lee1995}
Timothy~J. Lee, Jan M.~L. Martin, and Peter~R. Taylor.
\newblock An accurate ab initio quartic force field and vibrational frequencies
  for ch$_4$ and isotopomers.
\newblock \emph{J. Chem. Phys.}, 102:\penalty0 254, 1995.
\newblock \doi{https://doi.org/10.1063/1.469398}.

\bibitem[Bowman et~al.(2003)Bowman, Carter, and Huang]{Bowman2003}
Joel~M. Bowman, Stuart Carter, and Xinchuan Huang.
\newblock Multimode: A code to calculate rovibrational energies of polyatomic
  molecules.
\newblock \emph{International Reviews in Physical Chemistry}, 22\penalty0
  (3):\penalty0 533--549, 2003.
\newblock \doi{https://doi.org/10.1080/0144235031000124163}.

\bibitem[Li et~al.(2001)Li, Rosenthal, and Rabitz]{Li2001}
Genyuan Li, Carey Rosenthal, and Herschel Rabitz.
\newblock High dimensional model representations.
\newblock \emph{The Journal of Physical Chemistry A}, 105\penalty0
  (33):\penalty0 7765--7777, 2001.
\newblock \doi{https://doi.org/10.1021/jp010450t}.

\bibitem[Hansen et~al.(2008)Hansen, Christiansen, Toffoli, and
  Kongsted]{Hansen2008}
Mikkel~Bo Hansen, Ove Christiansen, Daniele Toffoli, and Jacob Kongsted.
\newblock A virtual vibrational self-consistent-field method for efficient
  calculation of molecular vibrational partition functions and thermal effects
  on molecular properties.
\newblock \emph{J. Chem. Phys}, 128:\penalty0 174106, 2008.
\newblock \doi{https://doi.org/10.1063/1.2912184}.

\bibitem[Heislbetz and Rauhut(2010)]{Heislbetz2010}
Sandra Heislbetz and Guntram Rauhut.
\newblock Vibrational multiconfiguration self-consistent field theory:
  Implementation and test calculations.
\newblock \emph{J. Chem. Phys.}, 132:\penalty0 124102, 2010.
\newblock \doi{https://doi.org/10.1063/1.3364861}.

\bibitem[Császár(2012)]{csaszar2012}
Attila~G. Császár.
\newblock Anharmonic molecular force fields.
\newblock \emph{WIREs Comput Mol Sci}, 2:\penalty0 273--289, 2012.
\newblock \doi{https://doi.org/10.1002/wcms.75}.

\bibitem[Bravyi and Kitaev(2002)]{Bravyi2002}
Sergey~B. Bravyi and Alexei~Yu. Kitaev.
\newblock Fermionic quantum computation.
\newblock \emph{Annals of Physics}, 298:\penalty0 210--226, 2002.
\newblock \doi{https://doi.org/10.1006/aphy.2002.6254}.

\bibitem[Nielsen and Chuang(2011)]{mikeandike}
Michael~A. Nielsen and Isaac~L. Chuang.
\newblock \emph{Quantum Computation and Quantum Information: 10th Anniversary
  Edition}.
\newblock Cambridge University Press, USA, 10th edition, 2011.
\newblock ISBN 1107002176.
\newblock \doi{https://doi.org/10.1017/CBO9780511976667}.

\bibitem[Babbush et~al.(2018)Babbush, Gidney, Berry, Wiebe, McClean, Paler,
  Fowler, and Neven]{babbush2018}
Ryan Babbush, Craig Gidney, Dominic~W Berry, Nathan Wiebe, Jarrod McClean,
  Alexandru Paler, Austin Fowler, and Hartmut Neven.
\newblock Encoding electronic spectra in quantum circuits with linear t
  complexity.
\newblock \emph{Physical Review X}, 8\penalty0 (4):\penalty0 041015, 2018.
\newblock \doi{https://doi.org/10.1103/PhysRevX.8.041015}.

\bibitem[Dresselhaus et~al.(2000)Dresselhaus, Dresselhaus, Eklund, and
  Rao]{dresselhaus2000carbon}
Mildred~S Dresselhaus, Gene Dresselhaus, PC~Eklund, and AM~Rao.
\newblock Carbon nanotubes.
\newblock In \emph{The physics of fullerene-based and fullerene-related
  materials}, pages 331--379. Springer, 2000.
\newblock \doi{https://doi.org/10.1007/978-94-011-4038-6_9}.

\bibitem[Kinloch et~al.(2018)Kinloch, Suhr, Lou, Young, and
  Ajayan]{kinloch2018composites}
Ian~A Kinloch, Jonghwan Suhr, Jun Lou, Robert~J Young, and Pulickel~M Ajayan.
\newblock Composites with carbon nanotubes and graphene: An outlook.
\newblock \emph{Science}, 362\penalty0 (6414):\penalty0 547--553, 2018.
\newblock \doi{https://doi.org/10.1126/science.aat7439}.

\bibitem[NIS()]{NISTWebBook}
\emph{NIST Chemistry WebBook, NIST Standard Reference Database Number 69}.
\newblock \doi{https://doi.org/10.18434/T4D303}.
\newblock URL \url{https://webbook.nist.gov}.

\bibitem[Chalifoux and Tykwinski(2010)]{chalifoux2010synthesis}
Wesley~A Chalifoux and Rik~R Tykwinski.
\newblock Synthesis of polyynes to model the sp-carbon allotrope carbyne.
\newblock \emph{Nature chemistry}, 2\penalty0 (11):\penalty0 967, 2010.
\newblock \doi{https://doi.org/10.1038/nchem.828}.

\bibitem[Milani et~al.(2011)Milani, Lucotti, Russo, Tommasini, Cataldo,
  Li~Bassi, and Casari]{milani2011charge}
Alberto Milani, Andrea Lucotti, Valeria Russo, M~Tommasini, Francesco Cataldo,
  Andrea Li~Bassi, and Carlo~Spartaco Casari.
\newblock Charge transfer and vibrational structure of sp-hybridized carbon
  atomic wires probed by surface enhanced raman spectroscopy.
\newblock \emph{The Journal of Physical Chemistry C}, 115\penalty0
  (26):\penalty0 12836--12843, 2011.
\newblock \doi{https://doi.org/10.1021/jp203682c}.

\bibitem[Carter et~al.(1997)Carter, Bowman, and Harding]{CARTER19971179}
Stuart Carter, Joel~M. Bowman, and Lawrence~B. Harding.
\newblock Ab initio calculations of force fields for h2cn and c1hcn and
  vibrational energies of h2cn.
\newblock \emph{Spectrochimica Acta Part A: Molecular and Biomolecular
  Spectroscopy}, 53:\penalty0 1179--1188, 1997.
\newblock \doi{https://doi.org/10.1016/S1386-1425(97)00010-3}.

\bibitem[Rauhut(2004)]{Rauhut2004}
Guntram Rauhut.
\newblock {Efficient calculation of potential energy surfaces for the
  generation of vibrational wave functions}.
\newblock \emph{The Journal of Chemical Physics}, 121:\penalty0 9313--9322,
  2004.
\newblock \doi{https://doi.org/10.1063/1.1804174}.

\bibitem[Wong and Raman(2007)]{Sumathy2007}
Bryan~M. Wong and Sumathy Raman.
\newblock Thermodynamic calculations for molecules with asymmetric internal
  rotors—application to 1,3-butadiene.
\newblock \emph{J. Comput. Chem.}, 28:\penalty0 759–766, 2007.
\newblock \doi{https://doi.org/10.1002/jcc.20536}.

\bibitem[Trujillo and McKemmish(2021)]{Zapata2021}
Juan C.~Zapata Trujillo and Laura~K. McKemmish.
\newblock Meta-analysis of uniform scaling factors for harmonic frequency
  calculations.
\newblock \emph{Wiley Interdiscip. Rev.: Comput. Mol. Sci.}, 12:\penalty0
  e1584, 2021.
\newblock \doi{https://doi.org/10.1002/wcms.1584}.

\bibitem[Liang et~al.(2023)Liang, Feng, Liu, and Head-Gordon]{HeadGordon2023}
Jiashu Liang, Xintian Feng, Xiao Liu, and Mrtin Head-Gordon.
\newblock Analytical harmonic vibrational frequencies with
  $\textrm{VV}$10-containing density functionals: Theory, efficient
  implementation, and benchmark assessments.
\newblock \emph{J. Chem. Phys.}, 158:\penalty0 204109, 2023.
\newblock \doi{https://doi.org/10.1063/5.0152838}.

\bibitem[Childs and Wiebe(2012)]{LCU2012}
Andrew~M. Childs and Nathan Wiebe.
\newblock Hamiltonian simulation using linear combinations of unitary
  operations.
\newblock \emph{Quantum Information and Computation}, 12:\penalty0 901--924,
  2012.
\newblock \doi{https://doi.org/10.26421/QIC12.11-12-1}.

\bibitem[Chakraborty(2024)]{LCU2024}
Shantanav Chakraborty.
\newblock Implementing any linear combination of unitaries on intermediate-term
  quantum computers.
\newblock \emph{Quantum}, 8:\penalty0 1496, 2024.
\newblock \doi{https://doi.org/10.22331/q-2024-10-10-1496}.

\bibitem[Low and Chuang(2019)]{Qubitization2019}
Guang~Hao Low and Isaac~L. Chuang.
\newblock Hamiltonian simulation by qubitization.
\newblock \emph{Quantum}, 3:\penalty0 163, 2019.
\newblock \doi{https://doi.org/10.22331/q-2019-07-12-163}.

\bibitem[et~al.(2024)]{qcentricmaterial}
Yuri~Alexeev et~al.
\newblock Quantum-centric supercomputing for materials science: A perspective
  on challenges and future directions.
\newblock \emph{Future Generation Computer Systems}, 160:\penalty0 666--710,
  2024.
\newblock \doi{https://doi.org/10.1016/j.future.2024.04.060}.

\bibitem[Christiansen et~al.(version 2022.10.0)]{midas}
Ove Christiansen et~al.
\newblock Midascpp: Molecular interactions dynamics and simulation.
\newblock version 2022.10.0.
\newblock URL \url{https://midascpp.gitlab.io/pages/manual.html}.

\bibitem[Abraham et~al.(2019)]{qiskit}
H{\'e}ctor Abraham et~al.
\newblock Qiskit: An open-source framework for quantum computing.
\newblock 2019.
\newblock \doi{https://doi.org/10.5281/zenodo.2562110}.

\end{thebibliography}


\onecolumn\newpage
\appendix 
\section{Vibrational self-consistent field method}
\label{Appendix_A}
Vibrational self-consistent field (VSCF) is a time-independent form of self-consistent field method. The method is implemented in the MidasCpp (Molecular Interactions, Dynamics and Simulation in C++/Chemistry Program Package)~\cite{midas} and provides the reference state and optimal single-mode basis functions
(modals). For a system with $L$ modes, the VSCF ansatz for the wave function is given by
\begin{equation}
\label{eq:VSCF_State}
\ket{\Phi_{\rm VSCF}}
= e^{\kappa} \ket{\Phi_\mathbf{R}}
= = \exp \left( \sum_{l=1}^L \sum_{k_l \, h_l} \kappa_{k_l h_l} \mathcal{E}_{k_l h_l} \right)
\ket{\Phi_\mathbf{R}} \, ,
\end{equation}
with the reference state being 
\begin{equation}
\label{eq:reference_State}
\ket{\Phi_\mathbf{R}} = \prod_l^L a_{R_l}^\dagger\,  \ket{\rm{vac}}\, .
\end{equation}
The vector $\mathbf{R}$ specifies the mode that modal is occupied in the reference state and ${\op{E}}_{k_l h_l}  =  a_{k_l}^\dagger a_{h_l}$, with $a_{k_l}^\dagger $ and $a_{h_l}$ being the bosonic creation and annihilation operators, respectfully, and $a_{h_l} \ket{\rm vac} = 0$. The operator $\kappa$ is anti-hermitian and  $\exp(\kappa)$ is unitary. The exponential prefactor generates rotations among the modals. The optimal VSCF modals are obtained by implementing the variational principle on the modal rotation parameters ($\kappa_{k_l h_l}$) of the VSCF energy 
$E_{\rm VSCF} = 
 \bra{\Phi_\mathbf{R}} \exp(-\kappa) H \exp(\kappa) \ket{\Phi_\mathbf{R}}$:
\begin{equation}
\frac{\partial}{\partial \kappa_{k_l h_l}} \bra{\Phi_\mathbf{R}} \exp(-\kappa) H \exp(\kappa)  \ket{\Phi_\mathbf{R}} = 0. 
\label{eq:derivative_VSCF_Energy}
\end{equation}
By inserting the VSCF ansatz in this equation, one can obtain 
\begin{equation}
 \bra{\Phi_\mathbf{R}} \, [H, {\op{E}}_{k_l h_l} ] \,\ket{\Phi_\mathbf{R}} 
 = 0 
 \label{eq:Optimal_Criterion} 
 \end{equation}
as a criterion for having an optimal $\ket{\Phi_\mathbf{R}}$ for $\kappa=0$. 
Then one can assign an effective mean-field operator for mode $l$, \begin{equation}
 F^{l,\mathbf{R}}_{k_l h_l} =
 \bra{\Phi_\mathbf{R}} \, [[a_{k_l},H ],a_{h_l}^\dagger] 
 \,\ket{\Phi_\mathbf{R}}.
 \label{eq:EMF}
\end{equation}
Employing the second quantization commutator relations, the $F^{l,\mathbf{R}}$  
matrix elements would be directly related to the VSCF gradient terms where a zero gradient can be obtained by diagonalizing the $F^{l,\mathbf{R}}$ matrix. Then solving the VSCF equations for a given Hamiltonian means constructing the $F^{l,\mathbf{R}}$ matrix, diagonalizing it, updating the Hamiltonian parameters, calculating the VSCF energy and checking the convergence;
 see \cite{hansen2010new,csaszar2012} for more details.

The other simpler but less accurate approach here is harmonic approximation, where states $\ket{\Phi}$ are the eigenfunctions of a harmonic oscillator for each mode.
Then the one-body and two-body integrals in Eq.~(2) of the main text can be expressed as a Taylor expansion and  are easy to calculate. These integrals are implemented in Qiskit \cite{qiskit}.


\newpage
\section{Electronic structure vs Vibrational structure }
\label{Appendix_B}
\begin{table*}[ht]
    \caption{Number of terms for different type of simulations based on the required number of qubits. The cut-off for the electronic structure simulations is 10$^{-8}$~hartree while it is 10$^{-5}$~cm$^{-1}$ for the vibrational structure ones. The numbers here are for Jordan–Wigner and  unary mappings for the electronic and vibrational structures, respectively.}
    \label{tab:sizes}
    \centering
    \begin{tabular}{cllcr}
        \hline
        Qubits & Molecule & \hspace{0.25cm} Simulation Type & Basis Functions/number of Modals & \hspace{0.5cm} Pauli Terms\\
        \hline
        10 & LiH \,  &  \textbf{electronic structure}& sto-3g &276
        \\
        12 & H$_2$O \,  & \textbf{electronic structure}& sto-3g &551
        \\
         20 & C$_2$H$_2$ or P$_1$ \,  &\textbf{electronic structure}& sto-3g &2951
        \\
         20 & LiH \, & \textbf{electronic structure}& 6-31g &5851
        \\
         24 & H$_2$O \, & \textbf{electronic structure}& 6-31g &8921
        \\
         28 & C$_2$H$_2$ or P$_1$ \, & \textbf{vibrational structure}&  4 modals &15657
        \\
        32 & N$_2$ \, & \textbf{electronic structure}& 6-31g &21521
        \\
        36 & LiH \,  & \textbf{electronic structure}& 
        cc-pvdz &63519
        \\
        40 & C$_2$H$_2$ or P$_1$ \, & \textbf{electronic structure}& 6-31g &53289
        \\
        42 & C$_2$H$_2$ or P$_1$ \, & \textbf{vibrational structure}&  6 modals  & 155279
        \\
        46 & H$_2$O \, & \textbf{electronic structure}& 
        cc-pvdz &107382
        \\
        52 & N$_2$ \, & \textbf{electronic structure}&
        cc-pvdz &145327
        \\
        52 &  P$_2$ \, & \textbf{vibrational structure}&  4 modals & 114948
        \\
        56 &  C$_2$H$_2$ or P$_1$ \, & \textbf{vibrational structure}&  8 modals & 693419
        \\
        72 &  C$_2$H$_2$ or P$_1$ \, & \textbf{electronic structure}& 
        cc-pvdz &542433
        \\
        76 &  P$_3$ \, & \textbf{vibrational structure}&  4 modals & 376974
        \\
        78 &  P$_2$ \, & \textbf{vibrational structure}&  6 modals & 1044058
        \\
        104 &  P$_2$ \, & \textbf{vibrational structure}&  8 modals & 4386132
        \\
        114 &  P$_3$ \, & \textbf{vibrational structure}&  6 modals & 3277705
        \\
        \hline
    \end{tabular}
\end{table*}

\end{document}